\newcommand{\B}{{\mathbb B}}
\newcommand{\A}{{\mathbb A}}
\newcommand{\R}{{\mathbb R}}
\newcommand{\N}{{\mathbb N}}
\newcommand{\Rnn}{{\mathbb R}_{\ge 0}}
\newcommand{\Rp}{{\mathbb R}_{> 0}}
\newcommand{\C}{{\mathbb C}}
\newcommand{\cA}{{\mathcal A}}
\newcommand{\cB}{{\mathcal B}}
\newcommand{\cD}{{\mathcal D}}
\newcommand{\cF}{{\mathcal F}}
\newcommand{\cG}{{\mathcal G}}
\newcommand{\cI}{{\mathcal I}}
\newcommand{\cK}{{\mathcal K}}
\newcommand{\cM}{{\mathcal M}}
\newcommand{\cO}{{\mathcal O}}
\newcommand{\cP}{{\mathcal P}}
\newcommand{\cQ}{{\mathcal Q}}
\newcommand{\cT}{{\mathcal T}}
\newcommand{\diag}{{\mathrm{diag}}}
\newcommand{\cl}{\mathrm{cl}}
\newcommand{\bfone}{\mathbf 1}
\newcommand{\inter}{\mathrm{int}}
\newcommand{\dom}{\mathrm{dom}}
\def\QED{\mbox{\rule[0pt]{1.3ex}{1.3ex}}} 
\newenvironment{proof}{{\quad \it Proof:\,}}{\hfill \QED \par}
\newenvironment{proof-of}[1]{{\quad\it Proof of #1:\,}}{\hfill\QED\par}
\newtheorem{thm}{Theorem}
\newtheorem{cor}{Corollary}
\newtheorem{defn}{Definition}
\newtheorem{prop}{Proposition}
\newtheorem{rem}{Remark}
\title{Geometric Properties of Isostables and Basins of Attraction of Monotone Systems
\author{Aivar Sootla and Alexandre Mauroy
\thanks{Aivar Sootla is with the Department of Engineering Science, University of Oxford, Parks Road, Oxford, OX1 3PJ, UK {\tt aivar.sootla@eng.ox.ac.uk}.}
\thanks{Alexandre Mauroy is with Namur Center for Complex Systems (naXys) and Department of Mathematics, University of Namur, B-5000, Belgium {\tt alexandre.mauroy@unamur.be}}
\thanks{The authors would like to thank Prof Zhao for explaining some aspects of the results in~\cite{jiang2004saddle}. Most of this work was performed, while Aivar Sootla and Alexandre Mauroy were with Montefiore Institute, Li\`ege University and were supported by F.R.S.-- FNRS postdoctoral grant and a BELSPO Return Grant, respectively. Currently, Aivar Sootla is supported by the EPSRC Grant EP/M002454/1.
}
}
}
\begin{document}
\maketitle
\IEEEpeerreviewmaketitle
\begin{abstract}
In this paper, we study geometric properties of basins of attraction of monotone systems. Our results are based on a combination of monotone systems theory and spectral operator theory. We exploit the framework of the Koopman operator, which provides a linear infinite-dimensional description of nonlinear dynamical systems and spectral operator-theoretic notions such as eigenvalues and eigenfunctions. The sublevel sets of the dominant eigenfunction form a family of nested forward-invariant sets and the basin of attraction is the largest of these sets. The boundaries of these sets, called isostables, allow studying temporal properties of the system. Our first observation is that the dominant eigenfunction is increasing in every variable in the case of monotone systems. This is a strong geometric property which simplifies the computation of isostables. We also show how variations in basins of attraction can be bounded under parametric uncertainty in the vector field of monotone systems. Finally, we study the properties of the parameter set for which a monotone system is multistable. Our results are illustrated on several systems of two to four dimensions. 
\end{abstract}
\begin{IEEEkeywords}
Monotone Systems, Koopman Operator, Computation of Isostables, Computation of Basins of Attraction, Genetic Toggle Switch
\end{IEEEkeywords}
\section{Introduction\label{s:intro}}
In many applications, such as economics and biology, the states of linear dynamical systems take only nonnegative values. These systems are called \emph{positive} and have received considerable attention in the context of systems theory~\cite{coxson1987positive,PosSysBook}, model reduction~\cite{Sootla2012positive, grussler2012symmetry}, distributed control~\cite{rantzer2015ejc,tanaka2011bounded}, etc. One of the main tools to study such systems is the Perron-Frobenius theorem (see e.g.~\cite{berman1994nonnegative}), which describes some spectral properties of the drift matrix in a linear positive system. In the nonlinear setting, positive systems have a couple of generalizations, one of which is the class of \emph{cooperative monotone systems} (see e.g.~\cite{hirsch2005monotone}). Similarly to the linear case, cooperative monotone systems generate trajectories (or flows) which are increasing functions in every argument with respect to the initial state and for every time. With a slight abuse of terminology, we will refer to cooperative monotone systems simply as \emph{monotone}. Over the years there were a number of developments in monotone systems theory \cite{dirr2015separable,gouze2000interval} as well as applications such as finance~\cite{bergenthum2007comparison}, energy networks~\cite{zlotnik2015flows}, ventilation systems~\cite{meyer2013controllability}, biology~\cite{angeli2004detection,angeli2010graph,rami2014stability}, etc.

The Koopman operator (see e.g.~\cite{budivsic2012applied}) provides a framework that allows to define and study spectral properties of dynamical systems in the basins of attraction of their hyperbolic attractors. The operator has eigenvalues and eigenfunctions (i.e., infinite dimensional eigenvectors), which are directly related to the geometric properties of the system. For instance, the level sets of the dominant eigenfunction (that is, the eigenfunction corresponding to the eigenvalue with the maximal real part) are called \emph{isostables}~\cite{mauroy2013isostables} and contain the initial conditions of trajectories that converge synchronously toward the fixed point. In addition, the interior of the sublevel set at infinity is the basin of attraction of the fixed point. Hence isostables serve as a convenient refinement of the basins of attraction and add further details to the geometric description of the system. 

In~\cite{hirsch2005monotone}, it was mentioned that the flow of a monotone system can be seen as a positive operator. Hence the authors argued that an operator version of the Perron-Frobenius theorem, which is called the Krein-Rutman theorem, can be applied. However, the investigation into spectral properties of these operators lacked, probably since spectral theory of such operators was not well-developed. This gap can be filled by the Koopman operator framework, which may pave the way to formulate a version of the Perron-Frobenius theorem for monotone systems admitting a stable hyperbolic fixed point.

In this paper, we first provide a spectral characterization of a so-called maximal Lyapunov function, which is used to compute basins of attraction in~\cite{vannelli1985maximal}. To do so we use the eigenfunctions of the Koopman operator. In the case of monotone systems, we show that the maximal Lyapunov functions can be constructed with only one eigenfunction under some mild assumptions. We proceed by studying the properties of the isostables, which we connect to the properties of basins of attraction. Basins of attraction have been extensively studied in the case of monotone systems~\cite{takac1991domains,smith2001stable,jiang2004saddle}. In~\cite{sootla2016basins}, we showed that the isostables of monotone systems have properties similar to the boundaries of basins of attraction. In this paper, we expand these arguments using properties of general increasing functions and order-convexity. Order-convexity is a strong geometric property that is well-suited to describe the behavior of monotone systems.

We proceed by studying systems with two asymptotically stable fixed points (i.e. bistable systems). We consider a class of bistable (not necessarily monotone) systems, whose vector fields can be bounded from below and above by the vector fields of two bistable monotone systems. For this class of systems, basins of attraction can be estimated using the basins of these monotone bounding systems. We note that the idea of bounding a system with monotone ones is not novel and appears in many works (see e.g.,~\cite{gouze2000interval,ramdani2010computing}). This approach is then extended to estimate basins of attraction of monotone systems under parametric uncertainty. A preliminary study on estimating basins of attraction under parameter uncertainty was performed in~\cite{sootla2016basins}, and we generalize it in this paper by providing easy to verify assumptions. Furthermore, we study the properties of the parameter set for which a monotone system is (at least) bistable. We illustrate our theoretical findings with several numerical examples. 

Our theoretical results are complemented by a discussion on methods for computing inner and outer estimates on basins of attraction. We cover methods based on linear algebra~\cite{mauroy2014global} and sum-of-squares programming~\cite{henrion2014convex, valmorbida2014region}, and their relation to the Koopman operator framework. However, in the case of monotone systems, we propose to exploit a different technique. According to properties of monotone systems, we can build inner and outer approximations of the basin of attraction by computing flows starting from a finite number of points. This allows us to derive data-sampled algorithms. We discuss two conceptually similar algorithms exploiting this idea~\cite{sootla2016basins, kim2016directed}.

We also exploit some results and techniques from~\cite{sootla2015pulsesaut}, which considers the problem of switching between the stable fixed points of a bistable monotone system by using a pulse control signal. In that work, the authors introduced the concept of switching set, which is reminiscent of basins of attraction, but in the space of control parameters. In~\cite{sootla2016nolcos}, the concept of eigenfunctions was extended to the pulse control problem. 

The rest of the paper is organized as follows. In Section~\ref{s:prel}, we introduce the main properties of the Koopman operator and monotone systems. In Section~\ref{s:order-convex} we present the properties of order-convex sets, which provide basic but strong topological tools for monotone systems. In Subsection~\ref{s:spec-mon} we discuss spectral properties of monotone systems and build (maximal) Lyapunov functions using eigenfunctions of the Koopman operator. We study geometric properties and investigate the behavior of basins of attraction of monotone systems under parameter variations in Subsections~\ref{ss:geometry} and~\ref{ss:basins}, respectively. We discuss methods to compute isostables and basins of attraction in Section~\ref{s:alg-disc}. We provide numerical examples in Section~\ref{s:examples} and conclude in Section~\ref{s:conc}. 

\section{Preliminaries\label{s:prel}}
Throughout the paper we consider parameter-dependent systems of the form
\begin{equation}
\label{sys:f}
\dot x = f(x,p),\quad x(0) = x_0,
\end{equation} 
with $f: \cD\times \cP\rightarrow \R^n$,  $\cD\subset\R^n$, and $\cP\subset\R^m$ for some integers $n$ and $m$. We define the flow map $\phi_f: \R \times \cD \times \cP\rightarrow \R^n$, where $\phi_f(t, x_0, p)$ is a solution to the system~\eqref{sys:f} with the initial condition $x_0$ and the parameter $p$. We assume that $f(x,p)$ is continuous in $(x,p)$ on $\cD\times \cP$ and twice continuously differentiable in $x$ for every fixed $p$, unless it is stated otherwise. We denote the Jacobian matrix of $f(x, p)$ as $J(x, p)$ for every $p$. When we consider systems which do not depend on parameters, we will simply omit the notation $p$ (e.g. $f(x)$, $J(x)$). If $x^\ast$ is a stable fixed point of $f(x)$, we assume that the eigenvectors of $J(x^\ast)$ are linearly independent (i.e., $J(x^\ast)$ is diagonalizable). We denote the eigenvalues of $J(x^\ast)$ by $\lambda_i$ for $i =1, \dots, k$ and assume that they have multiplicities $\mu_i$ such that $\sum_{i=1}^k\mu_i = n$. 

\emph{Koopman Operator.} We limit our study of the Koopman operator to a basin of attraction $\cB(x^\ast)$ of an attractive fixed point $x^\ast$.  

\begin{defn} The \emph{basin of attraction} $\cB(x^\ast)$ of an attractive fixed point $x^\ast$ for the system $\dot x = f(x)$ is the set of initial conditions $x$ such that the flow $\phi_f(t,x)$ converges to $x^\ast$, i.e. $\cB(x^\ast) = \left\{ x\in \R^n\Bigl| \lim\limits_{t\rightarrow\infty}\phi_f(t,x) = x^\ast \right\}$. \hfill $\diamond$
\end{defn}

When the fixed point $x^\ast$ can be understood from the context, we write simply $\cB$. We will assume that $x^\ast$ is a stable hyperbolic fixed point, that is, the eigenvalues $\lambda_j$ of the Jacobian matrix $J(x^\ast)$ are such that $\Re(\lambda_j) <0$ for all $j$.

\begin{defn}
\emph{The Koopman semigroup of operators} associated with $\dot x = f(x)$ is defined as
\begin{gather}
U^t g(x) = g(\phi_f(t,x)), 
\end{gather}
where the functions $g:\R^n\rightarrow \C$ are called observables, $x\in\cB(x^\ast)$ and $\phi_f(t,x)$ is a solution to $\dot x = f(x)$.	\hfill $\diamond$
\end{defn}

The Koopman semigroup is linear~\cite{mezic2005}, so that it is natural to study its spectral properties. Consider the system with $f\in C^2$ on the basin of attraction $\cB(x^\ast)$ of a stable hyperbolic fixed point. We define \emph{the Koopman eigenfunctions} as a nontrivial function $s_j$ satisfying $U^t s(x) = s(\phi_f(t, x)) = s(x) \, e^{\lambda t}$, where $e^{\lambda t}$ belongs to the point spectrum of $U^t$ and we refer to such $\lambda\in \C$ as \emph{Koopman eigenvalues}. In particular, the eigenvalues $\lambda_j$ of the Jacobian matrix $J(x^\ast)$ are Koopman eigenvalues under the assumptions above. Furthermore, it can be shown that there exist $n$ eigenfunctions $s_j \in C^1(\cB)$ associated with eigenvalues $\lambda_j$ (see e.g. \cite{mauroy2014global}), and:
\begin{gather}
(f(x))^T \nabla s_j(x) = \lambda_j s_j(x), \label{eq:s-one}
\end{gather} 
If $f$ is analytic and if the eigenvalues $\lambda_j$ are simple (i.e., $\mu_j = 1$ for all $j$), then the flow of the system can be expressed (at least locally) through the following expansion (cf.~\cite{mauroy2013isostables}):
\begin{align}
\label{eq:expansion}&\phi_f(t, x) = x^\ast + \sum\limits_{j=1}^{n} s_j(x) v_j e^{\lambda_j t} + \\
&\sum\limits_{\begin{smallmatrix}
k_1,\dots,k_n\in\N_0\\
k_1+\dots+k_n>1 
\end{smallmatrix}} v_{k_1,\dots,k_n} \, s_1^{k_1}(x) \cdots s_n^{k_n}(x) e^{(k_1\lambda_1 +\dots k_n\lambda_l) t}, \nonumber
\end{align}
where $\N_0$ is the set of nonnegative integers, $v_j$ are the right eigenvectors of $J(x^\ast)$ corresponding to $\lambda_j$, the vectors $v_{k_1,\dots,k_n} \,$ are the so-called Koopman modes (see~\cite{mezic2005, mauroy2014global} for more details). We also note that it is implicitly assumed in~\eqref{eq:expansion} that $v_i^T \nabla s_j(x^\ast)  = 0$ for all $i\ne j$ and $ v_i^T \nabla s_i(x^\ast) = 1$. In the case of a linear system $\dot x = A x$ where the matrix $A$ has the left eigenvectors $w_i$, the eigenfunctions $s_i(x)$ are equal to $w_i^T x$ and the expansion~\eqref{eq:expansion} has only the finite sum (i.e., $v_{k_1,\dots,k_n} = 0$). A similar (but lengthy) expansion can be obtained if the eigenvalues $\lambda_j$ are not simple and have linearly dependent eigenvectors (see e.g.,~\cite{mezic2015applications}).

Let $\lambda_j$ be such that $0>\Re(\lambda_1)> \Re(\lambda_j)$, $j\neq 1$, then the eigenfunction $s_1$, which we call \emph{dominant}, can be computed using the so-called Laplace average~\cite{mauroy2013isostables}:
\begin{gather}\label{laplace-average}
g_{\lambda}^\ast(x) = \lim\limits_{t\rightarrow \infty}\frac{1}{t}\int\limits_0^t (g\circ \phi_f(s, x)) e^{-\lambda s} d s.
\end{gather}
For all $g\in C^1$ that satisfy $g(x^\ast)=0$ and $v_1^T \nabla g(x^\ast) \neq 0$, the Laplace average $g_{\lambda_1}^\ast$ is equal to $s_1$ up to a multiplication with a scalar. Note that we do not require the knowledge of $\cB(x^\ast)$ in order to compute $s_1$, since the limit in~\eqref{laplace-average} does not converge to a finite value for $x\not\in\cB(x^\ast)$. The eigenfunctions $s_j$ with $j\ge 2$ (non-dominant eigenfunctions) are generally harder to compute and are not considered in the present study.

The eigenfunction $s_1$ captures the asymptotic behavior of the system. In order to support this statement we consider the following definition.
\begin{defn} Let $s_1$ be a $C^1(\cB)$ eigenfunction corresponding to $\lambda_1$ such that $0> \Re(\lambda_1) > \Re(\lambda_j)$ for $j\ge 2$. The \emph{isostables} $\partial \cB_\alpha$ for $\alpha\ge 0$ are boundaries of the sublevel sets $\cB_\alpha =  \{x \in \cB | |s_1(x)| \le \alpha \}$, that is, 
$\partial\cB_\alpha=\left\{x \in \cB | |s_1(x)|=\alpha\right\}$.
\hfill $\diamond$
\end{defn}

In the case of a linear system $\dot{x}=A x$, the isostables are the level sets of $|s_1(x)|=|w_1^T x|$, where $w_1$ is the left eigenvector of $A$ associated with the dominant eigenvalue $\lambda_1$. If $\lambda_1$ is real, then $s_1$ is real and we will use the following notation  $\partial_+ \cB_\alpha=\left\{x\in \cB\Bigl| s_1(x) = \alpha\right\}$, $\partial_- \cB_\alpha=\left\{x\in \cB\Bigl| s_1(x) = -\alpha\right\}$ for $\alpha\ge 0$. Furthermore, when $\lambda_1$ is real and simple, it follows from~\eqref{eq:expansion} that the trajectories starting from the isostable $\partial\cB_\alpha$ share the same asymptotic evolution
$\phi_f(t,x) \rightarrow x^\ast + v_1 \, \alpha  e^{\lambda_1 t}$ with $t\rightarrow \infty$.
This implies that the isostables contain the initial conditions of trajectories that converge synchronously toward the fixed point. In particular, trajectories starting from the same isostable $\partial \cB_{\alpha_1}$ reach other isostables $\partial \cB_{\alpha_2}$ (with $\alpha_2 < \alpha_1$) after a time $\cT = \ln \left({\alpha_1} / {\alpha_2}\right) / |\Re(\lambda_1)|$. If $\lambda_1$ is not simple, then the isostables are not unique. However, we will choose specific isostables in the case of monotone systems (see Section~\ref{s:basins}). 

It can be shown that $\cB_\alpha=\cB$ as $\alpha\rightarrow\infty$, so that we will also use the notations $\cB_\infty$ and $\partial \cB_\infty$ to denote the basin of attraction and its boundary, respectively. More information about the isostables and their general definition using the flows of the system can be found in~\cite{mauroy2013isostables}.

Using the same tools as in~\cite{mauroy2014global}, it can be shown (provided that the eigenvectors of $J(x^\ast)$ are linearly independent) that the function $W_\beta(x) = \left(\sum\limits_{i = 1}^n \beta_i |s_i(x)|^p\right)^{1/p}$, with $p\ge 1$ and $\beta_i>0$, is a Lyapunov function, that is $W_\beta\in C^1(\cB)$, $W_\beta(x^\ast)=0$, $W_\beta(x)>0$, and $\dot {W}_\beta(x) < 0$ for all $x\in\cB\backslash x^\ast$. The properties $W_\beta(x^\ast)=0$ and $W_\beta(x)>0$ for $x\ne x^\ast$ stem from the fact that the zero level sets of $s_i$'s intersect only in the fixed point (see~\cite{mauroy2014global}). On the other hand, one can show that $\dot{W}_\beta \leq - \Re\{\lambda_1\} \|x\|_p$ by direct computation and definition of $s_i$'s.  In~\cite{mauroy2014global}, it was also discussed that $W_\beta$ can be used to estimate a basin of attraction of the system if all $s_i$ can be computed, since the function becomes infinite on the boundary $\partial \cB$ of the basin of attraction. This property is reminiscent of the definition of the \emph{maximal Lyapunov function}~\cite{vannelli1985maximal}. 
\begin{defn}
	A function $V_m : \R^n \rightarrow \R \cup \{ +\infty \}$ is called a \emph{maximal Lyapunov function} for the system $\dot x = f(x)$ admitting an asymptotically stable fixed point $x^\ast$ with a basin of attraction $\cB$, if 
	\begin{enumerate}
		\item $V_m(x^*) = 0$, $V_m(x) > 0$ for all $x\in\cB\backslash x^*$;
		\item $V_m(x) < \infty$ if and only if $x\in\cB$;
		\item $V_m(x) \rightarrow \infty$ as $x\rightarrow\partial\cB$ and/or $\|x\|\rightarrow +\infty$;
		\item $\dot V_m$ is well defined for all $x\in\cB$, and $\dot V_m(x)$ is negative definite for all $x\in \cB\backslash x^\ast$. \hfill $\diamond$
	\end{enumerate}
\end{defn}

In particular, it is straightforward to show that
\begin{gather*}
V(x) = \begin{cases}
W_\beta(x) & x \in\cB \\
\infty & \text{otherwise}
\end{cases}
\end{gather*}
is a maximal Lyapunov function and $V\in C^1(\cB)$ provided that the fixed point $x^\ast$ is stable and hyperbolic, and $f\in C^2(\cB)$. 

\emph{Partial Orders and Monotone Systems}. We define a partial order as follows: $x\succeq y$ if and only if $ x - y \in \Rnn^n$. In other words, $x\succeq y$ means that $x$ is larger or equal to $y$ entrywise. We write $x\not \succeq y$ if the relation $x \succeq y$ does not hold. We will also write $x\succ y$ if $x\succeq y$ and $x\ne y$, and $x\gg y$ if $x- y \in \Rp^n$. Note that cones $\cK$ more general than $\Rp^n$ can also be used to define partial orders~\cite{angeli2003monotone}, but unless stated otherwise we will consider $\cK=\Rnn^n$. 
Systems whose flows preserve a partial order relation are called \emph{monotone systems}. 

\begin{defn}\label{def:mon}
The system is \emph{monotone} on $\cD\times \cP$ if $\phi_f(t, x, p)\preceq \phi_f(t, y, q)$ for all $t\ge 0$ and for all $x\preceq y$, $p\preceq q$, where $x$, $y\in\cD$, $p$, $q\in\cP$. The system is \emph{strongly monotone} on $\cD\times \cP$, if it is monotone and if $\phi_f(t, x, p) \ll \phi_f(t, y, q)$ holds for all $t>0$ provided that $x\preceq y$, $p\preceq q$, and either $x\prec y$ or $p\prec q$ holds, where $x$, $y\in\cD$ and $p$, $q\in\cP$. \hfill $\diamond$
\end{defn}

A certificate for monotonicity is given by the \emph{Kamke-M\"uller} conditions (see e.g~\cite{angeli2003monotone}). The certificate amounts to checking the sign pattern of the matrices $\partial f(x, p)/\partial x$ and $\partial f(x, p)/\partial p$. We will also consider the comparison principle~\cite{hirsch2005monotone}, which is typically used to extend some properties of monotone systems to a class of non-monotone ones. 

\begin{prop}
\label{lem:comp-prin-control}
 Consider the dynamical systems $\dot x = f(x)$ and $\dot x = g(x)$. Let one of the two systems be monotone on $\cD$. If $g(x) \succeq  f(x)$ for all $x\in\cD$ then $\phi_{g}(t, x_2) \succeq  \phi_{f}(t, x_1)$ for all $t\ge 0$ and for all $x_2\succeq  x_1$. \hfill $\diamond$
\end{prop}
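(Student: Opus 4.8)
The plan is to reduce the two-trajectory comparison to a statement about a single (monotone) vector field, and then to settle that statement by a perturbation ``first-touching-time'' argument built on the Kamke--M\"uller characterization of monotonicity. Write $u(t)=\phi_f(t,x_1)$ and $v(t)=\phi_g(t,x_2)$. The pointwise bound $g\succeq f$ means that one of the two trajectories is a differential sub- or supersolution of the \emph{monotone} field: if $g$ is the monotone system, then $\dot u=f(u)\preceq g(u)$, so $u$ is a subsolution of $g$ with $u(0)=x_1\preceq x_2=v(0)$ while $v$ is an exact solution; if instead $f$ is monotone, then $\dot v=g(v)\succeq f(v)$, so $v$ is a supersolution of $f$ while $u$ is an exact solution. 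In both cases the desired conclusion $u\preceq v$ follows from a single comparison lemma, the supersolution case being dual to the subsolution case. So it suffices to prove: if $h$ is monotone on $\cD$, $w(t)=\phi_h(t,w_0)$, and $z$ satisfies $\dot z\preceq h(z)$ with $z(0)\preceq w_0$, then $z(t)\preceq w(t)$ for all $t\ge 0$.

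First I would reduce to equal initial conditions. Let $p(t)=\phi_h(t,z(0))$; since $h$ is monotone and $z(0)\preceq w_0$, Definition~\ref{def:mon} gives $p(t)\preceq w(t)$ for all $t\ge0$, so it is enough to show $z(t)\preceq p(t)$, i.e. I may assume $z(0)=p(0)$. For the core estimate I would pass to a strict inequality by perturbation: for $\epsilon>0$ let $p^\epsilon$ solve $\dot p^\epsilon=h(p^\epsilon)+\epsilon\bfone$ with $p^\epsilon(0)=z(0)+\epsilon\bfone$, where $\bfone$ denotes the all-ones vector. I claim $z(t)\ll p^\epsilon(t)$ for as long as both trajectories are defined; at $t=0$ this holds since $z(0)\ll z(0)+\epsilon\bfone$. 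If it failed, let $t^\ast$ be the first time strictness is lost; by continuity $z(t^\ast)\preceq p^\epsilon(t^\ast)$ with equality in some coordinate $i$, $z_i(t^\ast)=p^\epsilon_i(t^\ast)$. The Kamke--M\"uller conditions (quasimonotonicity, equivalent to monotonicity under our $C^1$ assumption) then yield $h_i(z(t^\ast))\le h_i(p^\epsilon(t^\ast))$, whence
\[
\frac{d}{dt}\bigl(p^\epsilon_i-z_i\bigr)\Big|_{t^\ast}\ge h_i(p^\epsilon(t^\ast))+\epsilon-h_i(z(t^\ast))\ge\epsilon>0,
\]
which contradicts $p^\epsilon_i-z_i$ decreasing to zero at $t^\ast$ from strictly positive values. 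Hence $z(t)\ll p^\epsilon(t)$ throughout, and letting $\epsilon\downarrow0$ and using continuous dependence of solutions on $\epsilon$ and on the initial data ($p^\epsilon\to p$) gives $z(t)\preceq p(t)$, as required. The supersolution case is obtained by the symmetric perturbation $\dot p^\epsilon=h(p^\epsilon)-\epsilon\bfone$.

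The main obstacle is making the first-touching-time step fully rigorous: I must argue that $t^\ast$ is well defined, that at $t^\ast$ contact occurs with $z\preceq p^\epsilon$ and equality in at least one coordinate, and that the sign of the derivative of the coordinate gap there is genuinely incompatible with losing strictness. This is exactly where quasimonotonicity does the work, since it is precisely the inequality $h_i(a)\le h_i(b)$ for $a\preceq b$ with $a_i=b_i$ that controls the ``critical'' coordinate. Secondary technical points are ensuring the perturbed trajectories $p^\epsilon$ remain in $\cD$ on the relevant interval (so that $h$ and the Kamke condition apply) and invoking continuous dependence to pass to the limit $\epsilon\downarrow0$; these are routine given the standing continuity and differentiability hypotheses on the vector fields.
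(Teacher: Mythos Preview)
The paper does not supply its own proof of this proposition: it is stated as a known comparison principle and attributed to \cite{hirsch2005monotone}, so there is no in-paper argument to compare against. Your proof is correct and is in fact the standard route one finds in the monotone-systems literature (e.g.\ Smith's monograph): reduce the two-system comparison to a sub/supersolution versus solution comparison for the single monotone field, then run a strict-perturbation first-touching-time argument using quasimonotonicity (Kamke--M\"uller) to rule out contact in any coordinate, and finally pass to the limit $\epsilon\downarrow 0$ by continuous dependence. The only cosmetic remark is that your intermediate reduction to equal initial data via $p(t)=\phi_h(t,z(0))$ is unnecessary, since the perturbed trajectory $p^\epsilon$ already starts strictly above $z(0)$ regardless; but it does no harm.
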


\section{Increasing Functions and Order-Convex Sets}\label{s:order-convex}
Our subsequent derivations are based on the properties of the \emph{increasing} functions in $\R^n$, which were studied in the context of partial orders, for example, in~\cite{shaked2007stochastic}.
\begin{defn}
We call the set $\dom(g) = \{x  \in\R^n | |g(x)| < \infty  \}$ \emph{the effective domain} of a function $g$. A function $g:\R^n \rightarrow \R\cup \{-\infty, +\infty\}$ is called \emph{increasing} with respect to the cone $\Rnn^n$ if $g(x) \ge g(z)$ for all $x\succeq z$ and $x$, $z\in \dom(g)$. \hfill $\diamond$ 
\end{defn}

In this section, we study the properties of the \emph{sublevel} sets $\cA_\alpha=\left\{ x\in\R^n \Bigl| |g(x)| \le \alpha\right\}$ of increasing functions continuous on their effective domain $\dom(g)$. We first introduce a few concepts. Let open and closed intervals induced by the cone $\Rnn^n$ be defined as $[[x,~y]] = \{ z\in\R^n | x\ll  z \ll  y \}$ and $[x,~y] = \{ z\in\R^n | x\preceq  z \preceq  y \}$, respectively. 

\begin{defn}
A set $\cA$ is called \emph{order-convex} if, for all $x$, $y \in \cA$, the closed interval $[x,~y]$ is a subset of $\cA$. \hfill $\diamond$
\end{defn}

For an order-convex set $\cA$, the set of \emph{maximal elements $\partial_+ \cA$} (respectively, the set of \emph{minimal elements $\partial_- \cA$}) of $\cA$ is a subset of the boundary $\partial \cA$ of $\cA$ such that if $y\ll  z$ for some $y\in\partial_+ \cA$ (resp., if $y\gg  z$ for some $y\in\partial_- \cA$), then $z\not\in\cA$. It follows from the definition that for all $y,z \in \partial_+ \cA$ (or $y,z \in \partial_- \cA$), we cannot have $y\gg  z$ or $z\gg  y$. We have the following proposition, which is similar to results from~\cite{franklin1962orderconvex}.

\begin{prop}\label{prop:order-convex-general}
Let $\cA\in\R^n$ be order-convex. Then

(i) the boundary $\partial \cA$ of $\cA$ is the union of $\partial_+\cA$ and $\partial_- \cA$;

(ii) the interior of the set $\cA$ is the union of open intervals $[[x,~y]]$ over all $x\in\partial_- \cA$, $y\in\partial_+\cA$:
\begin{equation}\label{eq:order-convex-bound}
\inter(\cA) =\bigcup_{x\in\partial_- \cA,\,\,y\in\partial_+ \cA} [[x,~y]].
\end{equation}
\end{prop}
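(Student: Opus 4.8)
The plan is to prove both statements by exploiting the defining property of order-convexity together with the characterization of $\partial_+\cA$ and $\partial_-\cA$ as the maximal and minimal elements. I would begin with part (ii), since the open-interval decomposition of the interior is the more substantive claim, and then deduce part (i) from it (or prove them in tandem).

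For part (ii) I would prove the two inclusions separately. For the inclusion $\supseteq$, take any $x\in\partial_-\cA$, $y\in\partial_+\cA$, and a point $z\in[[x,~y]]$, so $x\ll z\ll y$. Since $x,y\in\cA$ (the extremal sets lie in $\partial\cA\subset\cl(\cA)$; I would make sure to state whether $\cA$ is assumed closed, or argue via a limiting sequence in $\cA$ approaching the extremal points) and $\cA$ is order-convex, the closed interval $[x,~y]$ lies in $\cA$. The point $z$ satisfies $x\ll z\ll y$, hence it sits strictly inside this box; I would argue that a small coordinate-wise neighborhood of $z$ still lies in $[x,~y]\subseteq\cA$, placing $z$ in $\inter(\cA)$. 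For the reverse inclusion $\subseteq$, take $z\in\inter(\cA)$. The main task is to produce a minimal element $x\in\partial_-\cA$ and a maximal element $y\in\partial_+\cA$ with $x\ll z\ll y$. The natural idea is to travel from $z$ along the order: decrease $z$ entrywise (e.g. move along the ray $z-t\bfone$ for $t\ge 0$, or more carefully along directions in $-\Rnn^n$) until one first exits $\cA$; the exit point should be a minimal element, and symmetrically moving up produces a maximal element. Because $z$ is interior, the strict inequalities $x\ll z\ll y$ should hold rather than mere $\preceq$.

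The hardest part, I expect, will be this last construction — showing that the downward and upward ``escape'' points genuinely belong to $\partial_-\cA$ and $\partial_+\cA$ respectively, and that strictness $x\ll z$, $z\ll y$ is preserved. Two technical obstacles arise. First, moving along a single direction such as $-\bfone$ need not land on a minimal element, since minimality is a coordinate-wise condition; I would likely need to decrease each coordinate and invoke order-convexity to assemble a genuine minimal element below $z$, or argue by contradiction that the infimum of the reachable decrease is attained at a point no smaller element of $\cA$ can undercut. Second, I must rule out pathologies where the boundary point reached fails strict inequality in some coordinate, which is where continuity of the situation and the openness of $\inter(\cA)$ are used to gain a little room.

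Finally, part (i) follows once (ii) is in hand: every boundary point is a limit of interior points, each sandwiched by extremal elements, and one shows the boundary point itself must be either maximal or minimal (it cannot be strictly interior to any such box, else it would be interior). Concretely, for $w\in\partial\cA$, if $w$ were neither in $\partial_+\cA$ nor in $\partial_-\cA$, there would exist points of $\cA$ both strictly above and strictly below $w$ in the order, and order-convexity would force a full neighborhood of $w$ into $\cA$, contradicting $w\in\partial\cA$. Hence $\partial\cA=\partial_+\cA\cup\partial_-\cA$. I would present (i) as a short corollary of the sandwiching argument used in (ii).
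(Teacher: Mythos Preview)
Your overall plan matches the paper's argument closely: both rest on the observation that a point of $\cA$ strictly sandwiched as $x\ll z\ll y$ by two points of $\cA$ must be interior (order-convexity gives $[x,y]\subset\cA$, and $z$ lies in the Euclidean interior of this box), and conversely a boundary point that is neither maximal nor minimal would admit such a sandwich and hence be interior---your argument for (i) is exactly this. The paper proves (i) before (ii) and is much terser; in particular its proof of the inclusion $\inter(\cA)\subseteq\bigcup[[x,y]]$ is the single contrapositive sentence ``if no such $x,y$ exist then $z\in\partial\cA$,'' without the explicit ray construction you outline.

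There is, however, a genuine gap in precisely the step you flag as ``the hardest part'': producing $x\in\partial_-\cA$ and $y\in\partial_+\cA$ with $x\ll z\ll y$ for a given $z\in\inter(\cA)$. Your ray $t\mapsto z-t\bfone$ may never leave $\cA$ if $\cA$ is unbounded below in the order. Concretely, for $\cA=\R^n$ or the half-space $\{x:x_1\le 0\}$ one of $\partial_-\cA$, $\partial_+\cA$ is empty, the right-hand side of \eqref{eq:order-convex-bound} is an empty union, and statement (ii) is literally false. The paper's proof shares this gap (it simply asserts that absence of sandwiching extremal elements forces $z\in\partial\cA$), so you are not missing anything relative to the paper; but you should be aware that an implicit extra hypothesis---for instance that $\cA$ is bounded, or at least that every interior point has points of $\partial\cA$ both strictly above and strictly below it---is needed for (ii) as stated. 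In the paper's applications to the sublevel sets $\cB_\alpha$ this is not an issue, but it is worth recording if you write the proof out in full.
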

\begin{proof}
(i) It follows from their definition that the sets $\partial_- \cA$ and $\partial_+\cA$ are the subsets of the boundary $\partial \cA$, if they are not empty. Hence, we only need to show that $\partial \cA \subseteq \partial_-\cA \cup \partial_+\cA$. If for $z\in\partial\cA$ there exist $x\in\partial_-\cA$ and $y\in \partial_+\cA$ such that $z \in [[x,~y]]$, then $z\in\inter(\cA)$. 
If there exists $z\in\partial\cA$ such that there exists no $x\in\cA$ with $x \preceq z$ or $x\succeq z$, then $z$ itself is the maximal or the minimal element of $\cA$. 

(ii) If for some point $z$ in $\cA$ there does not exist a minimal element $x\in\partial_- \cA$ and a maximal element $y\in\partial_+\cA$ such that $x \ll z \ll y$, then $z\in\partial\cA$ and $z$ cannot belong to the interior of $\cA$. Hence for all $z\in\inter(\cA)$, there exist $x\in\partial_- \cA$ and $y\in\partial_+\cA$ such that $z\in[[x,~y]]$, which proves the claim. 
\end{proof}

Now we discuss the connection between order-convex sets and connected sets. Recall that the set $\cA\subset\R^n$ is called \emph{connected} if for any two points $x$, $y \in \cA$, there exists a path $\gamma(t)$ (i.e., a continuous curve $\gamma:[0,1]\rightarrow\R^n$) with $\gamma(0) = x$, $\gamma(1) = y$, and such that $\gamma(t)\in \cA$ for all $t\in[0,1]$. The set is called \emph{simply connected} if it is connected and if every path between $x$, $y \in \cA$ can be continuously transformed, staying within $\cA$, into any other such path while preserving the endpoints. Since a union of sets can be disconnected, Proposition~\ref{prop:order-convex-general} does not imply that order-convex sets are simply connected or even connected. However, order-convex and connected sets are related to sublevel sets $\cA_\alpha=\left\{ x\in\R^n \Bigl| |g(x)| \le \alpha\right\}$ of increasing functions. 

\begin{prop} \label{prop:order-convex}
Let $g:\cD \to \R$ be a continuous function, where $\cD$ is an open order-convex set. Then: 

(i) the function $g$ is increasing with respect to $\Rnn^n$ if and only if the sublevel sets $\cA_\alpha \subseteq \cD$ are order-convex and connected for any $\alpha \ge 0$;

(ii) if $\cD \subseteq \R^2$ and if the function $g$ is increasing with respect to $\Rnn^2$, then the sets $\cA_\alpha \subseteq \cD$ are simply connected for any $\alpha \ge 0$. \hfill $\diamond$
\end{prop}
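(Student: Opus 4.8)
The plan is to prove the two implications of (i) separately and then deduce (ii) from the order-convexity established there. Throughout I assume $\cD$ is connected, since otherwise $\cA_\alpha$ can simply inherit a disconnection of $\cD$; this is the relevant case. The forward implication splits into order-convexity and connectedness, and order-convexity is immediate: if $x,z\in\cA_\alpha$ and $w\in[x,z]$ with $x\preceq w\preceq z$, then monotonicity gives $g(x)\le g(w)\le g(z)$, hence $|g(w)|\le\max(|g(x)|,|g(z)|)\le\alpha$ and $w\in\cA_\alpha$; since $[x,z]=\emptyset$ whenever $x\not\preceq z$, nothing more is required.

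The crux is connectedness of $\cA_\alpha$ for increasing $g$. I would write $U=\{x\in\cD\mid g(x)\le\alpha\}$ and $L=\{x\in\cD\mid g(x)\ge-\alpha\}$, so that $\cA_\alpha=U\cap L$. First, $U$ is path-connected: for $x,y\in U$ the broken segment $x\to x\wedge y\to y$ (down to the meet, then up to $y$) is monotone on each leg, so $g\le\max(g(x),g(y))\le\alpha$ along it and it stays in $U$. The key step is then to show that $\cA_\alpha$ is a deformation retract of $U$: flowing every point of $U$ with $g<-\alpha$ upward along the diagonal direction $\bfone=(1,\dots,1)$ until it meets the surface $\{g=-\alpha\}$, while fixing points already in $\cA_\alpha$, gives a continuous deformation of $U$ onto $\cA_\alpha$; since $t\mapsto g(p+t\bfone)$ is nondecreasing, this motion never raises $g$ above $-\alpha$ and hence never leaves $U$. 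As a deformation retract of a connected set is connected, the claim follows; equivalently, one applies this upward correction directly to the broken path to obtain a path from $x$ to $y$ lying entirely in $\cA_\alpha$.

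I expect this step to be the main obstacle, for two reasons: the meet $x\wedge y$ and the upward $\bfone$-ray must remain in $\cD$, and the retraction time $\tau(p)=\inf\{t\ge0\mid g(p+t\bfone)\ge-\alpha\}$ must be finite and continuous in $p$. For $\cD=\R^n$ both are clean, since along the broken path every point $p$ satisfies $p+t\bfone\succeq x$ for large $t$, giving $g(p+t\bfone)\ge g(x)\ge-\alpha$ and $\tau(p)<\infty$, with continuity of $\tau$ following from monotonicity along $\bfone$ (up to a minor caveat where $g$ is flat along $\bfone$ at level $-\alpha$). For a general connected order-convex $\cD$ the meet may fall outside $\cD$, and there I would instead work inside $[x\wedge y,\,x\vee y]\cap\cD$ and reduce to the previous case, or exploit the structure of Proposition~\ref{prop:order-convex-general}. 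For the converse I argue by contrapositive: if $g$ is not increasing there exist $x\prec z$ with $g(x)>g(z)$, and I would produce a level $\alpha$ for which the order interval up to $z$ is forced to leave $\cA_\alpha$ near $x$, violating order-convexity or connectedness; this direction is routine compared with the forward one but relies on testing the whole family $\{\cA_\alpha\}_{\alpha\ge0}$.

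Finally, (ii) follows from (i). In $\R^2$ order-convexity makes every axis-parallel slice of $\cA_\alpha$ an interval: the points $(c,y)$ for fixed $c$ are totally ordered in $y$, and the order interval between two of them is exactly the vertical segment joining them, so the slice is a genuine interval, and similarly for horizontal slices. A connected planar set all of whose axis-parallel slices are intervals can have no bounded complementary component, since a horizontal line through such a hole would meet $\cA_\alpha$ on both sides of the hole but not inside it, contradicting that the slice is an interval. Hence $\cA_\alpha$ has no holes, and combined with the connectedness from (i) it is simply connected, which completes the proof.
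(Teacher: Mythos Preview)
Your outline is sound, but it diverges from the paper's proof on two of the three main steps.

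\textbf{Connectedness in (i), forward direction.} The paper does not construct paths: it dispatches connectedness of $\cA_\alpha$ in one sentence by citing an external result (Proposition~1 in \cite{tibilelli1994connectedness}) to the effect that continuity of $g$ on an order-convex domain already forces it. Your broken-path-through-the-meet construction, followed by a diagonal retraction onto $\{g\ge -\alpha\}$, is far more explicit, but the obstacles you flag are real: for a general open order-convex $\cD$ the meet $x\wedge y$ need not lie in $\cD$ (take $\cD=\{x_1+x_2>0\}$ in $\R^2$, $x=(1,-\tfrac12)$, $y=(-\tfrac12,1)$), and the upward $\bfone$-ray may exit $\cD$ as well. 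Your proposed fix of working inside $[x\wedge y,\,x\vee y]\cap\cD$ does not obviously close the gap, since that intersection can itself be disconnected. What your approach buys is self-containment; what the paper's buys is brevity and complete avoidance of these domain issues.

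\textbf{Converse in (i).} Both you and the paper argue by contrapositive from $x\prec y$ with $g(x)>g(y)$. The paper carries out an explicit three-case split on the signs and relative magnitudes of $g(x)$ and $g(y)$, in each case exhibiting an $\alpha$ for which one of $x,y$ sits on the wrong boundary component $\partial_\pm\cA_\alpha$, contradicting $x\prec y$. Your sketch points in the same direction and would unfold into essentially this case analysis.

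\textbf{Part (ii).} Here the paper's argument differs from yours: it fixes an arbitrary closed curve $\eta\subset\cA_\alpha$, observes that any enclosed point $z$ is sandwiched as $y^1\prec z\prec y^2$ by two points $y^1,y^2\in\eta$ (hit by the diagonal rays $z\mp t\bfone$), and concludes $z\in\cA_\alpha$ by order-convexity, so $\eta$ contracts inside $\cA_\alpha$. Your axis-parallel slicing argument (vertical and horizontal sections are intervals, hence no bounded complementary component) is an equally valid alternative and arguably slightly cleaner, since it does not implicitly lean on the Jordan curve theorem.
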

\begin{proof} 
(i) \emph{Sufficiency.} Connectivity of $\cA_\alpha$ directly follows from the continuity of $g$ (see e.g. Proposition 1 in \cite{tibilelli1994connectedness}). Let $x$, $y\in \cA_\alpha$ for some $\alpha\ge 0$ and $z$ be such that $x\preceq z \preceq y$. Since $\cD$ is order-convex, $z\in \cD$ so that the value $g(z)$ is defined and we have that $g(x)  \le g(z) \le g(y)$. Clearly, if $|g(x)|\le \alpha$ and $|g(y)|\le \alpha$, then $|g(z)| \le \alpha$ and $z\in \cA_\alpha$. This implies that the set $\cA_\alpha$ is order-convex.

\emph{Necessity.} Assume that $x \prec y$ with $x,y \in \cD$ and $g(x)>g(y)$. Consider the different cases:
\begin{enumerate}

\item Case $g(y) <0$ and $|g(x)| < |g(y)|$. We have $\cA_{|g(x)|} \subset \cA_{|g(y)|}$, so that $x \in \cA_{|g(y)|}$ and $y \in \partial_- \cA_{|g(y)|}$. Since $x \in \cA_{|g(y)|}$ and $y \in \partial_- \cA_{|g(y)|}$, we cannot have $x \prec y$ and we arrive at a contradiction.
\item Case $g(y) <0$ and $|g(x)| > |g(y)|$. We have $\cA_{|g(x)|} \supset \cA_{|g(y)|}$, so that $y \in \cA_{|g(x)|}$. This implies $g(x)>0$ and $x \in \partial_+ \cA_{|g(x)|}$. Since $y \in \cA_{|g(x)|}$ and $x \in \partial_+ \cA_{|g(x)|}$, we cannot have $x \prec y$ and we arrive at a contradiction.
\item Case $g(y) \geq 0$. Since $g(x) > g(y) \ge 0$, this case is treated in the same way as the case (2).
\end{enumerate}
It follows that $x \preceq y$ implies $g(x) \leq g(y)$. 

(ii) Consider an arbitrary closed curve $\eta:[0,~1]\rightarrow\R^n$ in $\cA_\alpha$ and the set $\cA^\eta$ enclosed by the curve $\eta$. For all $z \in \cA^\eta$, there exist points $y^1 \in \eta \subset \cA_\alpha$ and $y^2 \in \eta \subset \cA_\alpha$ such that $y^1 \prec z \prec y^2$. Then order-convexity of $\cA_\alpha$ implies that $z \in \cA_\alpha$ and it follows that $\cA^\eta\subseteq \cA_\alpha$. Since the curve $\eta$ is in $\R^2$ we can shrink this curve continuously to a point which belongs to the set $\cA_\alpha$. Since the curve is an arbitrary closed curve in $\cA_\alpha$, the set $\cA_\alpha$ is simply connected.
\end{proof}

Proposition~\ref{prop:order-convex} also holds for functions $g:\R^n\rightarrow \R\cup\{-\infty, +\infty\}$ continuous on their effective domain $\dom(g)$. In this case, the  function $g$ is increasing if and only if $\dom(g)$ and $\cA_\alpha$ are connected and order-convex. It also stands to reason that point (ii) may potentially be extended to the case of $\R^n$ due to the structure of order-convex sets. However, such an extension has proved to be cumbersome, therefore, we leave it outside the scope of this paper. 

\section{Geometric Properties of Monotone systems} \label{s:basins}
\subsection{Spectral Properties and Lyapunov Functions}  
\label{s:spec-mon}

We first establish the spectral properties of the Koopman operator associated with monotone systems.
\begin{prop} \label{prop:mon-eig-fun}
	Consider the system $\dot x = f(x)$ with a stable hyperbolic fixed point $x^\ast$ with a basin of attraction $\cB$. Assume that  $\Re(\lambda_1) > \Re(\lambda_j)$ for all $j\ge 2$.  Let $v_1$ be a right eigenvector of the Jacobian matrix $J(x^\ast)$ and let $s_1\in C^1(\cB)$ be an eigenfunction corresponding to $\lambda_1$ (with $v_1^T \nabla s_1(x^\ast) = 1$).
	
	(i) if the system is monotone on $\cB$, then $\lambda_1$ is real and negative, and there exist $s_1$ and $v_1$ such that $s_1(x) \ge s_1(y)$ for all $x$, $y\in \cB$ satisfying $x\succeq y$, and $v_1\succ 0$; 
	
	(ii) if the system is strongly monotone on $\cB$, then $\lambda_1$ is simple (i.e., its multiplicity $\mu_1 = 1$), real and negative, $s_1(x)> s_1(y)$ for all $x$, $y\in \cB$ satisfying $x\succ y$,  $v_1\gg 0$. \hfill $\diamond$
\end{prop}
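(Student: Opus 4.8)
The plan is to read off the spectral claims from Perron--Frobenius theory applied to the Jacobian at $x^\ast$, and then transport the resulting sign information to the global eigenfunction $s_1$ through a flow-based representation. First I would invoke the Kamke--M\"uller conditions: monotonicity of the flow forces the off-diagonal entries of $J(x^\ast)$ to be nonnegative, i.e. $J(x^\ast)$ is Metzler. Writing $J(x^\ast)=N-cI$ with $N$ entrywise nonnegative and $c$ large, Perron--Frobenius applies to $N$: its spectral radius $\rho(N)$ is a real eigenvalue dominating all others in modulus, hence in real part, so the eigenvalue $\rho(N)-c$ of $J(x^\ast)$ is real and attains the maximal real part. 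By the hypothesis that a unique eigenvalue attains the maximal real part, this must be $\lambda_1$, which is therefore real; stability/hyperbolicity gives $\lambda_1<0$. The Perron eigenvectors supply a right eigenvector $v_1\succeq0$ and a left eigenvector $w_1\succeq0$, normalized so that $w_1^T v_1=1$; a nonnegative eigenvector is nonzero, so $v_1\succ0$. Under strong monotonicity the Jacobian is in addition irreducible, so the irreducible Perron--Frobenius theorem makes $\rho(N)$ simple with strictly positive eigenvectors, giving $\mu_1=1$ and $v_1\gg0$.

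Next I would record the gradient identity. Differentiating $f(x)^T\nabla s_1(x)=\lambda_1 s_1(x)$ at $x^\ast$, where $f(x^\ast)=0$, yields $J(x^\ast)^T\nabla s_1(x^\ast)=\lambda_1\nabla s_1(x^\ast)$, so $\nabla s_1(x^\ast)$ is a left eigenvector for $\lambda_1$; the normalization $v_1^T\nabla s_1(x^\ast)=1$ identifies it with $w_1$. To obtain monotonicity of $s_1$ I take the observable $g(x)=w_1^T(x-x^\ast)$, which satisfies $g(x^\ast)=0$ and $v_1^T\nabla g(x^\ast)=w_1^T v_1=1\ne0$, so the Laplace-average characterization gives $s_1=g_{\lambda_1}^\ast$ after fixing the scalar (this is the chosen $s_1$ when $\lambda_1$ is not simple). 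For $x\succeq y$, monotonicity gives $\phi_f(s,x)\succeq\phi_f(s,y)$, and since $w_1\succeq0$ we get $g(\phi_f(s,x))\ge g(\phi_f(s,y))$ for every $s\ge0$; multiplying by $e^{-\lambda_1 s}>0$, integrating, and passing to the limit preserves the inequality, so $s_1(x)\ge s_1(y)$. This proves (i).

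For (ii) I would argue by contradiction. Suppose $x\succ y$ in $\cB$ with $s_1(x)=s_1(y)$. Strong monotonicity gives $\phi_f(t,x)\gg\phi_f(t,y)$ for every $t>0$, while the eigenfunction relation gives $s_1(\phi_f(t,x))=s_1(x)e^{\lambda_1 t}=s_1(y)e^{\lambda_1 t}=s_1(\phi_f(t,y))$. Because $s_1$ is increasing by part (i), it is forced to be constant on the whole nondegenerate box $[\phi_f(t,y),\phi_f(t,x)]$, hence $\nabla s_1\equiv0$ on its nonempty interior. Picking an interior point $z_t$ of each box and letting $t\to\infty$, both corners converge to $x^\ast$, so the box collapses and $z_t\to x^\ast$; continuity of $\nabla s_1$ then forces $\nabla s_1(x^\ast)=0$, contradicting $\nabla s_1(x^\ast)=w_1\gg0$. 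Therefore $s_1(x)>s_1(y)$ whenever $x\succ y$.

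I expect the main obstacle to be precisely this last upgrade from non-strict to strict monotonicity: the flow-based representation only transports the weak inequality, and strictness must be recovered by the box-collapse argument, which uses both the strict positivity of $\nabla s_1(x^\ast)$ (irreducibility) and the $C^1$ regularity of $s_1$ up to $x^\ast$. A secondary point requiring care is the validity of the representation $s_1=g_{\lambda_1}^\ast$ together with the interchange of limit and order in part (i), where $\lambda_1$ need not be simple and the eigenfunction is determined only up to the indicated choice.
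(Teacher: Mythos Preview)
The paper does not actually prove this proposition; it states that the argument is almost identical to that of a similar result in~\cite{sootla2015koopman} and omits it. Your proof is correct and is built from exactly the machinery the paper sets up: Perron--Frobenius on the Metzler matrix $J(x^\ast)=N-cI$ yields the spectral claims on $\lambda_1$ and $v_1$, and the Laplace-average representation~\eqref{laplace-average} with the linear observable $g(x)=w_1^T(x-x^\ast)$---precisely the observable the paper itself adopts in Section~\ref{s:alg}---transports the partial order from the flow to $s_1$. Your box-collapse contradiction for (ii) is clean and uses just the available ingredients, namely $s_1\in C^1(\cB)$ and $\nabla s_1(x^\ast)=w_1\gg0$; since the basin is order-convex (Proposition~\ref{prop:basin-proper}), the boxes $[\phi_f(t,y),\phi_f(t,x)]$ stay inside $\cB$, so the gradient argument is legitimate. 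One small point worth tightening: strong monotonicity of the flow does not directly assert that $J(x^\ast)$ is irreducible; what it gives (and what the paper itself invokes in the proof of Proposition~\ref{prop:strong-mon-separ}) is that the variational operator $\partial\phi(t,x^\ast)=e^{J(x^\ast)t}$ is strongly positive for $t>0$, which for a Metzler matrix is equivalent to irreducibility. With that link made explicit, your argument is complete.
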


The proof of Proposition~\ref{prop:mon-eig-fun} is almost identical to the proof of a similar result in~\cite{sootla2015koopman}, and hence it is omitted. In both cases, the conditions on $\lambda_1$, $v_1$ and $s_1$ are only necessary and not sufficient for monotonicity, which is consistent with the linear case and necessary conditions for positivity. We note that $s_1\in C^1(\cB)$, if for example $f\in C^2$. We note that since eigenfunctions define isostables, this proposition plays a crucial role in the derivation of geometric properties of monotone systems. Additionally, a maximal Lyapunov function can be constructed by using $s_1$.

\begin{prop} Assume that the system $\dot x = f(x)$ is monotone on the basin of attraction $\cB$ of a stable hyperbolic fixed point $x^\ast$. If $\partial \cB_0$ intersects $x^\ast + \Rnn^n$ only in $x^\ast$, where $+$ is the Minkowski addition, then $s_1$ is a Lyapunov function of $\dot x = f(x)$ on $X = x^\ast + \Rnn^n\bigcap \cB$. Moreover, the function $V: X\rightarrow \Rnn$, equal to $s_1$ on $X$, can be extended to $\R^n$ so that this extension is a maximal Lyapunov function. \hfill $\diamond$
\end{prop}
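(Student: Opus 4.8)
The plan is to split the statement into two parts: first verifying that $s_1$ satisfies the defining properties of a Lyapunov function on $X = (x^\ast + \Rnn^n)\cap \cB$, and then constructing an extension to $\R^n$ meeting the four requirements of a maximal Lyapunov function. The backbone of both parts is Proposition~\ref{prop:mon-eig-fun}(i): since the system is monotone on $\cB$, the dominant eigenvalue $\lambda_1$ is real and negative and $s_1$ can be chosen increasing with respect to $\Rnn^n$. I would first record that $s_1(x^\ast)=0$: evaluating $s_1(\phi_f(t,x))=s_1(x)e^{\lambda_1 t}$ at $x=x^\ast$ gives $s_1(x^\ast)(1-e^{\lambda_1 t})=0$ for all $t$, and $\lambda_1\neq 0$ forces $s_1(x^\ast)=0$. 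Smoothness ($s_1\in C^1(\cB)\subseteq C^1(X)$) and the derivative are immediate from~\eqref{eq:s-one}, namely $\dot s_1(x)=\nabla s_1(x)^T f(x)=\lambda_1 s_1(x)$, so the whole first part reduces to proving $s_1(x)>0$ on $X\setminus\{x^\ast\}$.

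For positivity I would argue as follows. Any $x\in X$ satisfies $x\succeq x^\ast$, so monotonicity of $s_1$ yields $s_1(x)\ge s_1(x^\ast)=0$; hence $s_1\ge 0$ on $X$. If $s_1(x)=0$ for some $x\in X$, then $x\in\cB$ with $|s_1(x)|=0$, i.e. $x\in\partial\cB_0$, while also $x\in x^\ast+\Rnn^n$. The hypothesis $\partial\cB_0\cap(x^\ast+\Rnn^n)=\{x^\ast\}$ then forces $x=x^\ast$. Thus $s_1(x)>0$ for every $x\in X\setminus\{x^\ast\}$, and combined with $\lambda_1<0$ this gives $\dot s_1(x)=\lambda_1 s_1(x)<0$ there, completing the proof that $s_1$ is a Lyapunov function on $X$.

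For the second part I would first observe that $X$ is forward invariant: $\cB$ is forward invariant by definition, and $x\succeq x^\ast$ implies $\phi_f(t,x)\succeq\phi_f(t,x^\ast)=x^\ast$ by monotonicity, so $x^\ast+\Rnn^n$ is forward invariant as well. Hence $x^\ast$ attracts every trajectory issued from $X$, and it is natural to seek a maximal Lyapunov function whose effective domain is exactly $X$. The candidate extension is $V=s_1$ on $X$ and $V=+\infty$ on $\R^n\setminus X$; properties (1), (2) and (4) of the maximal Lyapunov function are then inherited directly from the first part and the construction. The remaining property (3) requires $V\to\infty$ as $x$ approaches the relevant boundary, which I would establish on $\partial\cB$ from $s_1(\phi_f(t,x))=s_1(x)e^{\lambda_1 t}$: running time backwards drives trajectories towards $\partial\cB$ while $e^{\lambda_1 t}\to\infty$, forcing $s_1$ to be unbounded there, in the same spirit as the blow-up of $W_\beta$ recalled in Section~\ref{s:prel}.

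The main obstacle is precisely this last point: making the blow-up of $s_1$ at the basin boundary rigorous, and reconciling it with the fact that part of $\partial X$ lies in the interior of $\cB$ (the coordinate faces $x_i=x^\ast_i$), where $s_1$ stays finite. I would handle this by arguing that property (3) need only be enforced on the genuine escape boundary $\partial X\cap\partial\cB$, since no trajectory leaves the forward-invariant set $X$ through the interior faces, and these faces are therefore not part of the basin boundary of the restricted system. If an honest blow-up on all of $\partial X$ is nonetheless demanded, I would add a barrier term supported near those faces and check that the sign of $\dot V$ is preserved. Verifying that such a correction keeps $\dot V$ negative definite while leaving $V=s_1$ on $X$ is the delicate bookkeeping I expect to occupy most of the effort.
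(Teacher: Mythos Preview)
Your argument for the first part is exactly the paper's approach, only more explicit: the paper's proof records forward invariance of $x^\ast+\Rnn^n$, computes $\dot s_1=\lambda_1 s_1$ from \eqref{eq:s-one}, notes that $\lambda_1$ is real negative and that $s_1$ vanishes on $x^\ast+\Rnn^n$ only at $x^\ast$, and concludes. You unpack that last assertion by using the increasing property of $s_1$ (Proposition~\ref{prop:mon-eig-fun}(i)) together with the hypothesis $\partial\cB_0\cap(x^\ast+\Rnn^n)=\{x^\ast\}$, which is precisely the intended mechanism.

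For the second part you are being more careful than the paper itself: the paper's entire proof of the extension claim is the sentence ``The second part of the statement is straightforward.'' It does not address the boundary issue you raise---namely that the coordinate faces of $X$ lying in the interior of $\cB$ are places where $s_1$ stays finite, so the naive extension $V=s_1$ on $X$, $V=+\infty$ elsewhere does not obviously satisfy property~(3). Your observation that trajectories cannot exit $X$ through those faces (by forward invariance), so that they are not part of the escape boundary for the restricted dynamics, is the natural reading of what the authors must have in mind. In short: there is no divergence of method to report, and the ``obstacle'' you flag is a genuine loose end in the paper's own presentation rather than a defect of your plan.
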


\begin{proof}
	It is clear that the set $x^\ast + \Rnn^n$ is forward-invariant for monotone systems. Moreover, according to~\eqref{eq:s-one}, we have  
	\begin{gather*}
	\frac{d s_1(x)}{dt} = (f(x))^T \nabla s_1(x) = \lambda_1 s_1(x),
	\end{gather*}
	where $\lambda_1$ is real and negative and $s_1(x)$ is equal to zero only for $x = x^\ast$ on $x^\ast + \Rnn^n$. Hence $s_1$ is a Lyapunov function on $X$. The second part of the statement is straightforward.
\end{proof}

The assumption on $\partial \cB_0$ is needed, since for some systems the isostable $\partial\cB_0$ can intersect $x^\ast + \Rnn^n$ along one of the axes. For planar systems, this may occur if, for example, $\partial f_1(x)/\partial x_2 = 0$ on $x^\ast + \Rnn^2\cap \cB$ (that is, the Jacobian matrix is reducible on $x^\ast + \Rnn^2\cap \cB$). On the other hand, $\partial\cB_0 \cap x^\ast + \Rnn^n =\{x^\ast\}$ if the system is strongly monotone. 

\subsection{Geometry of Basins of Attraction and Isostables}
\label{ss:geometry}
In this section, we study the properties of $\cB_\alpha$, including the limit case $\alpha = \infty$, where some of the properties of eigenfunctions may not longer hold. First, we recall the following result in monotone systems theory. A proof can be found for example in~\cite{sootla2015pulsesaut}.

\begin{prop} \label{prop:basin-proper}
Let the system $\dot{x} = f(x)$ be monotone on the basin of attraction $\cB(x^\ast)$ of an asymptotically stable fixed point $x^\ast$, then $\cB(x^\ast)$ is order-convex. \hfill $\diamond$
\end{prop}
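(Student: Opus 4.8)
The plan is to prove order-convexity directly from the order-preserving (squeeze) property of monotone flows. Recall that order-convexity of $\cB(x^\ast)$ requires that for every $x, y \in \cB(x^\ast)$ the closed interval $[x,~y] = \{z \mid x \preceq z \preceq y\}$ is contained in $\cB(x^\ast)$. The first reduction I would make is to observe that this interval is nonempty only when $x \preceq y$: if $x$ and $y$ are incomparable, then transitivity forces $[x,~y] = \emptyset$ and the claim holds vacuously. Thus it suffices to fix $x \preceq y$ with $x, y \in \cB(x^\ast)$, take an arbitrary $z$ with $x \preceq z \preceq y$, and show that $z \in \cB(x^\ast)$.

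The central step is to apply monotonicity to the chain $x \preceq z \preceq y$. Since the system is monotone on $\cB$, the flow preserves the order, so for every $t \ge 0$ we obtain the sandwich
\[
\phi_f(t, x) \preceq \phi_f(t, z) \preceq \phi_f(t, y).
\]
Because $x, y \in \cB(x^\ast)$, both outer trajectories converge to $x^\ast$ as $t \to \infty$. The order $\preceq$ induced by $\Rnn^n$ is exactly the componentwise order, so the sandwich reads, coordinate by coordinate, $[\phi_f(t,x)]_i \le [\phi_f(t,z)]_i \le [\phi_f(t,y)]_i$. Applying the scalar squeeze theorem in each coordinate then yields $\phi_f(t,z) \to x^\ast$, i.e. $z \in \cB(x^\ast)$, completing the argument.

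The step I would treat most carefully — and the only place where the specific cone enters — is the passage from order-convergence of the bounding trajectories to componentwise (hence norm) convergence of the sandwiched trajectory. For the cone $\Rnn^n$ this is immediate, because the order sandwich is literally a componentwise sandwich and the squeeze theorem applies with no further work. If one were to replace $\Rnn^n$ by a more general cone $\cK$, this is precisely where an additional normality assumption on $\cK$ would be needed, so it is the conceptual crux even though it is computationally trivial in the present setting. One should also record implicitly that $z$ lies in the region where the flow exists for all $t \ge 0$, which is guaranteed since $x, y \in \cB$ and $z$ lies in the interval they span, on which monotonicity and hence the global forward existence of trajectories is assumed.
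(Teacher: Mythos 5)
Your proof is correct and follows essentially the same route as the paper's: the paper does not prove Proposition~\ref{prop:basin-proper} itself but defers to~\cite{sootla2015pulsesaut}, and the argument there is precisely your sandwich argument — order the three trajectories by monotonicity and squeeze $\phi_f(t,z)$ between two trajectories converging to $x^\ast$.

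One point deserves sharper treatment than you give it. The hypothesis ``monotone on $\cB(x^\ast)$'' (Definition~\ref{def:mon} with $\cD=\cB(x^\ast)$) literally constrains only pairs of points both already known to lie in $\cB(x^\ast)$, while your central inequality $\phi_f(t,x)\preceq\phi_f(t,z)\preceq\phi_f(t,y)$ is applied to $z$, whose membership in $\cB(x^\ast)$ is exactly what is to be proved. So the comparison must be invoked on an order-convex set containing $[x,~y]$, not merely on $\cB(x^\ast)$; this is the reading intended by the paper (cf.\ Assumption A1, where monotonicity holds on $\cD_f\supseteq\cB(x^\ast)$, and the way the proposition is used inside Theorem~\ref{thm:comp-sys-b}) and by~\cite{sootla2015pulsesaut}. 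You acknowledge this in your closing paragraph, but you resolve it — together with forward completeness of $\phi_f(\cdot,z)$ — by declaring both ``assumed,'' which is slightly circular. The cleaner bookkeeping is: on the maximal interval of existence of $\phi_f(\cdot,z)$, the comparison principle (Proposition~\ref{lem:comp-prin-control} with $g=f$, valid on the larger domain) confines $\phi_f(t,z)$ to the order interval $[\phi_f(t,x),~\phi_f(t,y)]$, which lies in a fixed bounded set because both bounding trajectories converge and are therefore bounded; hence $\phi_f(\cdot,z)$ cannot leave compact subsets of the domain in finite time, so it exists for all $t\ge 0$, after which your coordinatewise squeeze finishes the proof.
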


Let $s_1$ be an eigenfunction corresponding to $\lambda_1$, which is increasing on $\cB(x^\ast)$. Since $\cB(x^\ast)$ is order-convex and connected, we can extend $s_1$ to $\R^n$ by assigning $\infty$ on $\R^n\backslash \cB(x^\ast)$. With a slight abuse of notation, we denote the extended function by $s_1$ as well. In general, for an eigenvalue $\lambda_1$ with a multiplicity $\mu_1$, there exist $\mu_1$ eigenfunctions corresponding to $\lambda_1$. Therefore, without loss of generality, we define the isostables as level sets of an increasing eigenfunction $s_1$ corresponding to $\lambda_1$.

The role of order-convexity and topology induced by order-intervals has been studied in the literature. Most of the results were shown for the semiflow case (e.g.,~\cite{jiang2004saddle}). We expand these results to characterize the geometric properties of sublevel sets of the eigenfunction $s_1$ as follows: 

\begin{prop} \label{thm:level-set-proper}
Let the system~\eqref{sys:f} have a stable hyperbolic fixed point $x^\ast$ with a domain of attraction $\cB(x^\ast)$. If the system is monotone on $\cB(x^\ast)$ with $s_1\in C^1(\cB(x^\ast))$, then

(i) the set $\cB_\alpha\subseteq \cB(x^\ast)$ is order-convex and connected for any nonnegative $\alpha$ including $\alpha = \infty$;

(ii) the boundary $\partial \cB_\alpha$ of $\cB_\alpha$ can be split into two manifolds: the set of minimal elements equal to $\partial_- \cB_\alpha$ and the set of maximal elements equal to $\partial_+ \cB_\alpha$. Moreover, the manifolds $\partial_- \cB_\alpha$, $\partial_+ \cB_\alpha$ do not contain points $x$, $y$ such that $x\gg  y$ for any nonnegative $\alpha$ including $\alpha = \infty$;

(iii) if the system is strongly monotone, then the manifolds $\partial_- \cB_\alpha$, $\partial_+ \cB_\alpha$ do not contain points $x$, $y$ such that $x\succ  y$ for any finite nonnegative $\alpha$.\hfill $\diamond$
\end{prop}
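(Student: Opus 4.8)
The plan is to derive parts (i) and (ii) almost entirely from the order-convexity machinery of Section~\ref{s:order-convex} applied to the dominant eigenfunction, and to obtain part (iii) from the strict monotonicity that strong monotonicity provides. By Proposition~\ref{prop:mon-eig-fun}(i) the hypotheses guarantee that $\lambda_1$ is real and negative and that $s_1$ may be chosen increasing on $\cB(x^\ast)$, i.e. $s_1(x)\ge s_1(y)$ whenever $x\succeq y$. By Proposition~\ref{prop:basin-proper} the domain $\cB(x^\ast)$ is itself order-convex, and it is open and connected: openness is standard for basins of asymptotically stable fixed points, and connectedness follows because the trajectory $t\mapsto\phi_f(t,x)$ joins each $x$ to $x^\ast$ inside $\cB(x^\ast)$ by forward invariance. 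These are precisely the ingredients needed to invoke Propositions~\ref{prop:order-convex} and~\ref{prop:order-convex-general}.

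For part (i) I would treat finite $\alpha$ and $\alpha=\infty$ separately. For finite $\alpha$, I apply Proposition~\ref{prop:order-convex}(i) with $\cD=\cB(x^\ast)$ and $g=s_1$: since $s_1$ is continuous and increasing on the open order-convex set $\cB(x^\ast)$, its sublevel sets $\cB_\alpha=\{x\in\cB(x^\ast)\mid |s_1(x)|\le\alpha\}$ are order-convex and connected. For $\alpha=\infty$ the eigenfunction no longer controls the set, but $\cB_\infty=\cB(x^\ast)$, which is order-convex and connected by the observations above, so the statement persists in the limit.

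Part (ii) is then read off from Proposition~\ref{prop:order-convex-general}(i): since each $\cB_\alpha$ (including $\cB_\infty$) is order-convex, its boundary decomposes as $\partial\cB_\alpha=\partial_-\cB_\alpha\cup\partial_+\cB_\alpha$, and the impossibility of $x\gg y$ with $x$, $y$ in the same piece is exactly the general property of sets of minimal/maximal elements recorded after their definition. To match these with the level-set notation I use that $s_1$ is increasing and real (as $\lambda_1$ is real): the maximal elements of $\cB_\alpha$ are the points with $s_1=\alpha$ and the minimal ones those with $s_1=-\alpha$. The ``manifold'' claim I would get from regularity. Differentiating along the flow, \eqref{eq:s-one} gives $f(x)^T\nabla s_1(x)=\lambda_1 s_1(x)$, so on any level set $\{s_1=c\}$ with $c\ne 0$ the right-hand side equals $\lambda_1 c\ne 0$, forcing $\nabla s_1\ne 0$ there; the implicit function theorem then renders $\partial_\pm\cB_\alpha$ a $C^1$ hypersurface for finite $\alpha>0$, and near $x^\ast$ for $\alpha=0$, where $v_1^T\nabla s_1(x^\ast)=1$ already gives $\nabla s_1(x^\ast)\ne 0$.

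For part (iii), strong monotonicity upgrades Proposition~\ref{prop:mon-eig-fun} to the strict statement $s_1(x)>s_1(y)$ whenever $x\succ y$. Fixing a finite $\alpha\ge 0$ and supposing for contradiction that $x$, $y\in\partial_+\cB_\alpha$ with $x\succ y$, the identification $\partial_+\cB_\alpha=\{s_1=\alpha\}$ would force $s_1(x)=s_1(y)=\alpha$, contradicting $s_1(x)>s_1(y)$; the same argument with $\partial_-\cB_\alpha=\{s_1=-\alpha\}$ settles the minimal case. The conclusion is restricted to finite $\alpha$ precisely because it compares finite values of $s_1$ at both points, which is unavailable on $\partial\cB_\infty=\partial\cB(x^\ast)$. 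I expect the \emph{main obstacle} to lie in this limit case together with the manifold wording: on the basin boundary the eigenfunction argument degenerates, so the order-convex decomposition and the absence of $\gg$-comparable pairs must be argued directly from order-convexity of $\cB(x^\ast)$ rather than from $s_1$, and promoting $\partial_\pm\cB_\infty$ to genuine manifolds would require extra structure, such as expressing the extremal sets as Lipschitz graphs transverse to $\bfone$, which I would only sketch.
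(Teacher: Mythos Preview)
Your proposal is correct and follows essentially the same route as the paper: part (i) combines Proposition~\ref{prop:mon-eig-fun} (increasing $s_1$), Proposition~\ref{prop:basin-proper} (order-convex basin), and Proposition~\ref{prop:order-convex}; part (ii) is Proposition~\ref{prop:order-convex-general} plus the defining property of extremal elements; and part (iii) is exactly the paper's contradiction via $s_1(x)>s_1(y)$ for $x\succ y$. The only differences are cosmetic: the paper treats finite and infinite $\alpha$ uniformly by extending $s_1$ to $\R^n$ with value $+\infty$ outside $\cB(x^\ast)$ (as noted after Proposition~\ref{prop:order-convex}) rather than splitting cases, and it does not supply the implicit-function-theorem justification for the word ``manifold'' that you add.
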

\begin{proof}
(i) By Proposition~\ref{prop:mon-eig-fun}, the eigenfunction $s_1$ is increasing, while its effective domain $\dom(s_1) = \cB(x^\ast)$ is order-convex by Proposition~\ref{prop:basin-proper}. Hence, the result follows from Proposition~\ref{prop:order-convex}. 

(ii) We have shown in the first point that the set $\cB_\alpha$ is order-convex for any nonnegative $\alpha$ including $+\infty$. Hence, the first statement follows by Proposition~\ref{prop:order-convex-general}. The second statement follows by definition of the set of minimal (maximal) elements and the fact that $\partial \cB_\alpha = \partial_- \cB_\alpha \cup \partial_+ \cB_\alpha$.

(iii) Let there exist $x$, $y \in \partial_- \cB_\alpha$ such that $x\succ  y$. Since $x$, $y \in \partial_- \cB_\alpha$, we have that $s_1(x) = s_1(y)$, but according to Proposition~\ref{prop:mon-eig-fun}, $x\succ  y$ implies that $s_1(x) > s_1(y)$. Hence no such $x$ and $y$ exist. Similarly, the case of $\partial_+ \cB_\alpha$ is shown.
\end{proof}

It is important to note that the boundary $\partial\cB$ can contain two points $x$, $y$ such that $x\gg  y$. But in this case these points belong to different manifolds $\partial_- \cB_\alpha$, $\partial_+ \cB_\alpha$. 

As we have shown above, some of the geometric properties of $\cB_\alpha$ are preserved in the limiting case $\alpha=\infty$. However, the third point of Proposition~\ref{thm:level-set-proper} is shown only for the case of a finite $\alpha$. This is due to the fact that the notion of order-convexity does not fully capture the properties of strictly increasing functions such as the dominant eigenfunction $s_1$ of a strongly monotone system. We discuss this issue under additional assumptions. We will consider the case of a bistable monotone system, which allows deriving many geometric properties of monotone systems. We make the following assumptions, which we also use in the sequel.
\begin{enumerate}
  \item[{A1.}] Let the system~$\dot x = f(x)$ have two asymptotically stable fixed points in $\cD_f$, denoted as $x^\ast$ and $x^\bullet$, and let $\cD_f=\cl(\cB(x^\ast)\cup \cB(x^\bullet))$;
  \item[{A2.}] Let the fixed points be such that $x^\bullet \succeq  x^\ast$.
\end{enumerate}

The following proposition is a direct corollary of the results in~\cite{jiang2004saddle}, where the basins of attraction of semiflows were studied. We prove it for completeness, in order to discuss how assumptions in~\cite{jiang2004saddle} translate into our simplified case.

\begin{prop} \label{prop:strong-mon-separ}Let the system satisfy Assumptions {A1}--{A2} and be strongly monotone on $\cD_f$. Assume also that for any bounded set $A\in\cD_f$ the set $O(A) = \bigcup_{t\ge 0}\phi(t, A)$ is bounded. Then the boundary between the basins $\cB(x^\ast)$ and $\cB(x^\bullet)$ does not contain two points $x$, $y$ such that $x \succ y$. \hfill $\diamond$
\end{prop}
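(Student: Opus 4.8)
The plan is to argue by contradiction: suppose there exist points $x$, $y$ on the boundary $\Sigma$ between the two basins with $x\succ y$, and derive a violation of Assumption~{A2}. Write $\Sigma = \cD_f \setminus (\cB(x^\ast)\cup\cB(x^\bullet))$; since the two basins are open, disjoint and fully invariant, $\Sigma$ is closed in $\cD_f$ and forward invariant (a point whose trajectory ever meets an open basin already lies in that basin). The first step is a normalization: because the system is strongly monotone and $\Sigma$ is invariant, I would replace $x$, $y$ by $\phi(1,x)$, $\phi(1,y)$, which still lie in $\Sigma$ but now satisfy the strict relation $x\gg y$ in every coordinate. This strictness is exactly what lets small perturbations preserve the order in the steps below.

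The engine of the contradiction is the elementary observation that no point of the lower basin can dominate a point of the upper basin: if $a\in\cB(x^\ast)$, $b\in\cB(x^\bullet)$ and $a\succeq b$, then monotonicity gives $\phi(t,a)\succeq\phi(t,b)$ for all $t\ge 0$, and letting $t\to\infty$ the closedness of the cone $\Rnn^n$ yields $x^\ast\succeq x^\bullet$, contradicting {A2} together with the distinctness of the two fixed points. Granting this, I would finish as follows: approximate $x$ from below by a point $a\in\cB(x^\ast)$ close to $x$, and $y$ from above by a point $b\in\cB(x^\bullet)$ close to $y$. Since $x\gg y$ holds strictly, choosing the approximations close enough forces $a\gg b$, hence $a\succeq b$ with $a\in\cB(x^\ast)$ and $b\in\cB(x^\bullet)$, which is precisely the forbidden configuration. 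Thus no ordered pair $x\succ y$ can lie on $\Sigma$.

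The main obstacle is justifying these two approximations, i.e. that a boundary point is genuinely approached by the lower basin from below and by the upper basin from above. Equivalently, I would establish that the basins are order-closed inside $\cD_f$: if $w\in\cB(x^\bullet)$ and $\cD_f\ni w'\succ w$ then $w'\in\cB(x^\bullet)$, and dually for $\cB(x^\ast)$. This is where the hypothesis that $O(A)$ is bounded becomes essential: it guarantees that every orbit starting in $\cD_f$ is precompact, so that by the convergence theory of strongly monotone flows its $\omega$-limit set is a single equilibrium, necessarily $x^\ast$, $x^\bullet$, or an equilibrium on $\Sigma$. Strong monotonicity then pins down which one: from $w'\succ w$ one gets $\phi(t,w')\gg\phi(t,w)\to x^\bullet$, so the limiting equilibrium is $\succeq x^\bullet$, and using that $x^\ast$, $x^\bullet$ are the extreme (minimal and maximal) equilibria of the bistable region excludes every alternative except $x^\bullet$. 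These are exactly the structural facts proved for semiflows in~\cite{jiang2004saddle}; the function of Assumptions~{A1}--{A2} and of the boundedness of $O(A)$ in the argument above is precisely to supply their hypotheses in the present finite-dimensional, two-equilibrium setting.
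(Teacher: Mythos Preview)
The paper's proof is essentially a one-line citation: it checks that the hypotheses of Theorem~2.2 in~\cite{jiang2004saddle} are met (the flow is trivially an $\alpha$-contraction in finite dimensions; strong monotonicity supplies both the SOP property and strong positivity of the variational flow $\partial\phi(t,x)$) and then invokes that theorem directly for the conclusion. Since your final paragraph also defers the hard work to~\cite{jiang2004saddle}, the two proofs ultimately coincide at the level of what is actually being invoked.

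That said, the self-contained contradiction you sketch before citing~\cite{jiang2004saddle} has a circularity. You want to approximate $x\in\Sigma$ from below by some $a\in\cB(x^\ast)$, but the order-closedness of $\cB(x^\bullet)$ that you prove only tells you that a point $a\ll x$ lies in $\cB(x^\ast)\cup\Sigma$, not in $\cB(x^\ast)$ itself. If $a\in\Sigma$ you have merely produced another ordered pair on $\Sigma$, which is the hypothesis you started from. A clean way to close the argument from your order-closedness lemma is instead: for every $z\in[[y,x]]$ one has $z\notin\cB(x^\bullet)$ (else $x\succ z$ forces $x\in\cB(x^\bullet)$) and $z\notin\cB(x^\ast)$ (else $z\succ y$ forces $y\in\cB(x^\ast)$), so the nonempty open set $[[y,x]]$ is contained in $\Sigma$, contradicting $\cD_f=\cl(\cB(x^\ast)\cup\cB(x^\bullet))$.

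Your derivation of order-closedness also rests on two facts that Assumptions~A1--A2 do not supply by themselves: that every precompact orbit converges to a single equilibrium (strong monotonicity gives this only generically, not for all initial conditions), and that $x^\ast$, $x^\bullet$ are the extreme equilibria in $\cD_f$. These are exactly the sort of structural results that~\cite{jiang2004saddle} establishes, so once you are citing that reference anyway, the paper's route of invoking its Theorem~2.2 directly for the final statement is shorter and avoids the intermediate gaps.
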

\begin{proof}
This result is a corollary of Theorem~2.2 in~\cite{jiang2004saddle}, hence we need to make sure that all assumptions are satisfied. Assumption (A1) in~\cite{jiang2004saddle} states that the semiflow should be an $\alpha$-contraction, where $\alpha(\cdot)$ is a Kuratowski measure of non-compactness. Since our flow is in $\R^n$, $\alpha(B)$ for any bounded set is equal to zero. The strongly order preserving (SOP) property and strong monotonicity in our case are equivalent. Furthermore, the operator $\partial \phi(t, x)$, which is the fundamental solution of $\dot{\delta x} = J(\phi(t,x)) \delta x$, is strongly positive for strongly monotone systems. Finally we complete the proof by applying Theorem~2.2 in~\cite{jiang2004saddle}.
\end{proof}

The assumption on boundedness of $O(A)$ is technical and is typically made to avoid pathological cases in monotone systems theory. Hence the only assumption to check is strong monotonicity, which is valid if the system is monotone and the Jacobian is irreducible for all $x\in\cD_f$. Proposition~\ref{prop:strong-mon-separ} offers a strong theoretical result, however, its practical implications is limited in comparison with Proposition~\ref{thm:level-set-proper}. Even if we establish that the system is strongly monotone it does not offer any direct computational advantage in comparison with the monotone --but not strongly monotone-- case. 

\subsection{Basins of Attraction of Bistable Systems}\label{ss:basins}
In this subsection, we first consider the class of non-monotone systems with vector fields that are bounded from below and above by vector fields of monotone systems. We show that the basins of attraction of such systems can be bounded by basins of attraction of the bounding systems. 

\begin{thm} \label{thm:comp-sys-b}
Let the systems~$\dot x = f(x)$, $\dot x = h(x)$, and $\dot x = g(x)$ satisfy Assumptions~{A1}--{A2}. Let $\cD = \cD_g  = \cD_f = \cD_h$, the systems~$\dot x = h(x)$ and $\dot x = g(x)$ be monotone on $\cD$ and 
\begin{align}\label{ineq:comp-prin}
	g(x) \preceq  f(x)\preceq  h(x)\textrm{ for } x\in\cD.
\end{align}
Assume also that the fixed points $x^\ast_g$,  $x^\ast_f$, $x^\ast_h$, $x^\bullet_f$ satisfy 
\begin{gather}\label{eq:condss-b} 
x^\ast_g, x^\ast_f, x^\ast_h \in \cB(x^\ast_g)\cap \cB(x^\ast_h), \\ 
\label{eq:condss2-b} x^\bullet_f\not \in [x^\ast_g, x^\ast_h]. 
\end{gather}
Then the following relations hold:
\begin{equation}
\label{st:bound-b} \cB (x^\ast_{g}) \supseteq \cB (x^\ast_{f}) \supseteq \cB(x^\ast_{h}).
\end{equation} 
Moreover, the sets $\cB (x^\ast_{g})$, $\cB(x^\ast_{h})$ are order-convex.\hfill $\diamond$
\end{thm}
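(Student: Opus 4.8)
The plan is to make a trajectory sandwich, obtained from the comparison principle, the engine of the whole argument, and then to read off each containment in~\eqref{st:bound-b} from the asymptotics of the two \emph{monotone} bounding flows. The order-convexity assertion is the cheap part: since $g$ and $h$ are monotone on $\cD$, Proposition~\ref{prop:basin-proper} applies verbatim to each, so $\cB(x^\ast_g)$ and $\cB(x^\ast_h)$ are order-convex. For the chain itself, I would first apply Proposition~\ref{lem:comp-prin-control} twice with equal initial conditions: to the pair $(f,h)$ (with $h$ monotone and $f\preceq h$) and to the pair $(g,f)$ (with $g$ monotone and $g\preceq f$), obtaining $\phi_g(t,x_0)\preceq\phi_f(t,x_0)\preceq\phi_h(t,x_0)$ for all $t\ge 0$ and all $x_0\in\cD$. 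Evaluating this at $x_0=x^\ast_f$, where $\phi_f(t,x^\ast_f)\equiv x^\ast_f$, and letting $t\to\infty$ while using $x^\ast_f\in\cB(x^\ast_g)\cap\cB(x^\ast_h)$ from~\eqref{eq:condss-b}, yields the fixed-point ordering $x^\ast_g\preceq x^\ast_f\preceq x^\ast_h$. Combined with $x^\bullet_f\succeq x^\ast_f$ (Assumption~{A2}) and~\eqref{eq:condss2-b}, this forces $x^\bullet_f\not\preceq x^\ast_h$, the single ordering fact needed below.

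Next I would prove the two outer inclusions $\cB(x^\ast_h)\subseteq\cB(x^\ast_g)$ and $\cB(x^\ast_f)\subseteq\cB(x^\ast_g)$ by the same squeeze. Given $x_0$ in the smaller basin, the corresponding bounding flow ($\phi_h(\cdot,x_0)$ or $\phi_f(\cdot,x_0)$) converges to $x^\ast_h$, respectively $x^\ast_f$, and both limits lie in the open set $\cB(x^\ast_g)$ by~\eqref{eq:condss-b}. Hence for some large $T$ the bounding state $z:=\phi_h(T,x_0)$ (resp. $\phi_f(T,x_0)$) already belongs to $\cB(x^\ast_g)$; since $\phi_g(T,x_0)\preceq z$, monotonicity of $g$ propagates this as $\phi_g(t+T,x_0)\preceq\phi_g(t,z)\to x^\ast_g$, so the limit set $\omega_g(x_0)$ is dominated by $x^\ast_g$. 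As $x^\bullet_g\succ x^\ast_g$, this excludes convergence to $x^\bullet_g$, and with Assumption~{A1} and order-convexity of $\cB(x^\ast_g)$ one concludes $x_0\in\cB(x^\ast_g)$.

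The middle inclusion $\cB(x^\ast_h)\subseteq\cB(x^\ast_f)$ then follows from the full sandwich. For $x_0\in\cB(x^\ast_h)$ the upper bound gives $\limsup_t\phi_f(t,x_0)\preceq x^\ast_h$, and, using the outer inclusion just established (so $x_0\in\cB(x^\ast_g)$), the lower bound gives $\liminf_t\phi_f(t,x_0)\succeq x^\ast_g$; hence $\omega_f(x_0)\subseteq[x^\ast_g,x^\ast_h]$. By Assumption~{A1} the only attractors of $f$ are $x^\ast_f$ and $x^\bullet_f$, and since $x^\bullet_f\not\preceq x^\ast_h$ it cannot lie in $[x^\ast_g,x^\ast_h]$; this rules out $x^\bullet_f$ and pins $\omega_f(x_0)=\{x^\ast_f\}$, i.e.\ $x_0\in\cB(x^\ast_f)$. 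The three parts should therefore be carried out in the order: order-convexity, the two outer inclusions, then the middle one, since the lower bound in the middle step relies on the outer inclusion.

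I expect the main obstacle to be the very last move in each inclusion: upgrading a \emph{confined} $\omega$-limit set --- contained in $[x^\ast_g,x^\ast_h]$, or dominated by $x^\ast_g$ --- to the statement that it equals a single stable fixed point. The sandwich only traps the limit set inside an order interval; ruling out trajectories that remain on the separatrix between the two basins (points of $\cD$ lying in neither basin but in their common closure under Assumption~{A1}) requires the convergence theory of monotone systems, exactly as invoked for Proposition~\ref{prop:strong-mon-separ} through~\cite{jiang2004saddle}. This is also where order-convexity of the monotone basins does real work, since it constrains how such boundary invariant sets can sit relative to the confining interval.
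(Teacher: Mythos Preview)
Your plan is essentially the paper's proof: comparison sandwich, the fixed-point ordering $x^\ast_g\preceq x^\ast_f\preceq x^\ast_h$ from evaluating at $x_0=x^\ast_f$, then the two outer inclusions into $\cB(x^\ast_g)$, and finally the middle inclusion $\cB(x^\ast_h)\subseteq\cB(x^\ast_f)$ using the outer one as input; order-convexity is obtained from Proposition~\ref{prop:basin-proper} in both.

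The one organizational difference worth noting is how the outer inclusion $\cB(x^\ast_f)\subseteq\cB(x^\ast_g)$ is closed off. You bound $\phi_g$ only from above (via $\phi_g(t+T,x_0)\preceq\phi_g(t,z)\to x^\ast_g$), obtain $\omega_g(x_0)\preceq x^\ast_g$, and then must separately argue away the separatrix. The paper instead supplies a \emph{lower} bound as well: it observes that both $g$-fixed points satisfy $x^\ast_g\preceq x^\ast_g$ and $x^\ast_g\preceq x^\bullet_g$, so (asserting convergence of $\phi_g(\cdot,y)$ to one of them) one has $\lim_t\phi_g(t,y)\succeq x^\ast_g$. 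This two-sided estimate traps $\phi_g(t,y)$ in the order interval $[x^\ast_g-\varepsilon\bfone,\,x^\ast_f+\varepsilon\bfone]$, which lies inside $\cB(x^\ast_g)$ by order-convexity and~\eqref{eq:condss-b}; once the flow is inside the basin, forward invariance finishes the argument without any further appeal to $\omega$-limit structure. The same two-sided trap, now in $[x^\ast_g-\varepsilon\bfone,\,x^\ast_h+\varepsilon\bfone]$, is what the paper uses for the middle inclusion, exactly as you outline. Your worry about the separatrix is fair, but note that it is already hidden in the paper's assertion that $\phi_g(t,y)$ converges to $x^\ast_g$ or $x^\bullet_g$; both versions quietly rely on Assumption~A1 to exclude boundary behavior, and neither invokes~\cite{jiang2004saddle} for this step.
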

\begin{proof} 
i) First we note that the assumption~\eqref{eq:condss-b} implies that $x^\ast_g\preceq  x^\ast_f\preceq  x^\ast_h$. Indeed, 
by Proposition~\ref{lem:comp-prin-control} for all $t\ge 0$, we have
$\phi_g(t, x^\ast_f) \preceq  \phi_f(t, x^\ast_f) \preceq  \phi_h(t, x^\ast_f)$, and thus taking the limit $t\to\infty$ we get $x^\ast_g \preceq  x^\ast_f\preceq  x^\ast_h$.

ii) Next we show that $g(x)\preceq  f(x)$ for all $x \in\cD$ implies that $\cB (x^\ast_{g}) \supseteq \cB (x^\ast_{f})$. Let $y\in\cB (x^\ast_{f})$. By Proposition~\ref{lem:comp-prin-control} we have that $\phi_{g}(t, y) \preceq  \phi_{f}(t, y)$. Furthermore, $\lim_{t\rightarrow\infty}\phi_{f}(t, y) = x^\ast_f$ and $\phi_{g}(t, y)$ converges to $x^\ast_g$ or $x^\bullet_g$, which implies that $\lim_{t\rightarrow\infty}\phi_{g}(t, y)\succeq x^\ast_g$. Hence, there exists a time $T$ such that
\[
x^\ast_g - \varepsilon \bfone\ll  \phi_{g}(t, y) \ll  x^\ast_f+\varepsilon \bfone
\] 
for all $t>T$ and some positive $\varepsilon$. We can pick a small $\varepsilon$ such that $x^\ast_f+\varepsilon \bfone$ and $x^\ast_g -\varepsilon \bfone$ lie in $\cB(x^\ast_g)$ (due to~\eqref{eq:condss-b}). According to Proposition~\ref{prop:basin-proper}, the flow 
$\phi_{g}(t, y)$ lies in $\cB(x^\ast_g)$ and hence $y\in\cB(x^\ast_g)$, which completes the proof.

iii) Similarly to ii), we have that $\cB (x^\ast_{g}) \supseteq \cB(x^\ast_{h})$.

iv) Finally, we show that $\cB (x^\ast_{f}) \supseteq \cB(x^\ast_{h})$. Let $y\in\cB(x^\ast_{h})$. By Proposition~\ref{lem:comp-prin-control}, we have that 
\[
\phi_{g}(t, y )\preceq   \phi_{f}(t, y ) \preceq  \phi_{h}(t, y), 
\]
for all $t\ge 0$. Furthermore, due to iii), we have that $y\in\cB (x^\ast_{g})$ and $\phi_{g}(t, y )$ converges to $x^\ast_{g}$, and there exists a $T$ such that 
\[
x^\ast_g -\varepsilon \bfone\preceq   \phi_{f}(t, y) \preceq  x^\ast_h + \varepsilon \bfone
\] 
for all $t>T$ and some small positive $\varepsilon$. We can also choose an $\varepsilon$ such that $x^\ast_h + \varepsilon \bfone$ and $x^\ast_g -\varepsilon \bfone$ lie in $\cB(x^\ast_g)\cap \cB(x^\ast_h)$ due to~\eqref{eq:condss-b}. Hence the flow $\phi_{f}(t, y)$ belongs to the set $\{z | x^\ast_g - \varepsilon \bfone \preceq   z \preceq  x^\ast_h + \varepsilon \bfone\}$ for all $t > T$. Since the system~$\dot x = f(x)$ is bistable, the flow must converge to $x^\bullet_f$ or $x^\ast_f$. If the flow converges to $x^\bullet_f$, it violates condition~\eqref{eq:condss2-b}. Hence the flow $\phi_f(t,y)$ converges to $x^\ast_f$ and $y\in\cB(x^\ast_f)$.

v) Order-convexity of the sets $\cB(x^\ast_g)$ and $\cB(x^\ast_h)$ follows from Proposition~\ref{prop:basin-proper}.
\end{proof}

\begin{figure}[t]
\includegraphics[width=0.75\columnwidth]{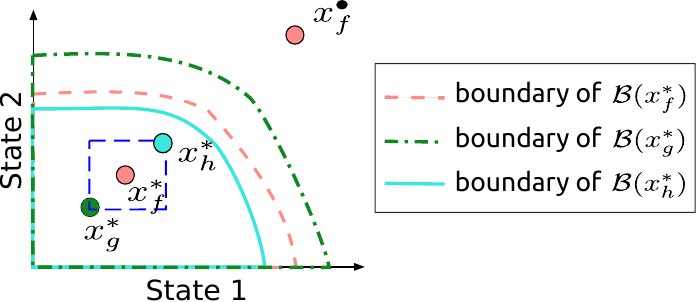}
\centering
\caption{A schematic depiction of the conditions~\eqref{eq:condss-b} and~\eqref{eq:condss2-b}. The condition~\eqref{eq:condss-b} ensures that all the fixed points lie in the intersection of the corresponding domains of attractions (in this case, it is $B(x^\ast_h)$). The fixed point $x^\bullet_f$ cannot lie in the dashed blue box due to condition~\eqref{eq:condss2-b}.}
\label{fig:propss}
\end{figure}

The conditions~\eqref{eq:condss-b},~\eqref{eq:condss2-b} are technical and generally easy to satisfy. An illustration of these conditions is provided in Figure~\ref{fig:propss}. Checking the condition~\eqref{eq:condss2-b} is equivalent to computing the stable fixed points. Similarly, condition~\eqref{eq:condss-b} holds if the trajectories of the systems~$\dot x = h(x)$ and $\dot x = g(x)$ initialized at $x^\ast_f$ converge to $x^\ast_g$ and $x^\ast_h$, respectively. This can be verified by numerical integration.

A key to using this result is the computation of bounding systems, which we discuss below. Systems with vector fields that can be bounded from above and below by vector fields of monotone systems are typically such that only a few interactions between the states are not consistent with monotonicity. These systems are called near-monotone and are often observed in biological applications (for a comprehensive discussion on near-monotonicity, see~\cite{sontag2007monotone}). Assume that there exists a single interaction which is not compatible with monotonicity. Namely, let the $(i,1)$-th entry in the Jacobian matrix $J(x)$ be smaller than zero for some $i\ne 1$ and some $x$. A bounding monotone system can be obtained by replacing the variable $x_1$ with a constant in the function $f_i(x_1,x_2,\dots, x_n)$, which removes the interaction between the states $x_i$ and $x_1$. Hence, we simply use $g_j(x) = f_j(x)$ if $i\ne j$ and $g_i(x) = f_i(\delta, x_2, \dots, x_n)$ with $\delta$ small enough. Similarly we set $h_j(x) = f_j(x)$ if $i\ne j$ and $h_i(x) = f_i(\eta, x_2, \dots, x_n)$ with $\eta$ large enough. More details on this procedure can be found in~\cite{sootla2015pulsesaut}. However, when the bounding systems are found, we still need to check the other conditions in the premise of Theorem \ref{thm:comp-sys-b}. It can happen that the bounding systems are not bistable, or not even stable (see, the toxin-antitoxin system in Section~\ref{s:examples-ta}). Unfortunately, we are not aware of an algorithm, which can guarantee bistability and monotonicity in the bounding systems.

A corollary from Theorem~\ref{thm:comp-sys-b} allows to estimate the basins of attraction of monotone systems under parametric uncertainty. 
\begin{cor}
\label{prop:basin-unc}
Consider a family of systems $\dot x = f(x, p)$ with a vector of parameters $p$ taking values from a set $\cP$. Let the systems $\dot x = f(x, p)$ satisfy Assumptions~{A1}--{A2} and be monotone on $\cD\times \cP$, where $\cD = \cD_{f(\cdot,  q)}$ for all $q\in\cP$. Consider an interval $[p_{\rm min}, p_{\rm max}]$ such that  
\begin{gather}
\label{eq:condssp-b} x^\ast(p_{\rm min})\in \cB(x^\ast(p_{\rm max})), x^\ast(p_{\rm max})\in \cB(x^\ast(p_{\rm min})),\\
\label{eq:condssp2-b} x^\bullet(p_{\rm min}) \not \in [x^\ast(p_{\rm min}), x^\ast(p_{\rm max})].
\end{gather}
Then the following relation holds: 
\begin{equation}\label{eq:boundsparam}
 \cB(x^\ast(p_{\rm min})) \supseteq \cB(x^\ast(p)) \supseteq \cB(x^\ast(p_{\rm max}))
\end{equation} 
for all $p$ in $[p_{\rm min}, p_{\rm max}]$. Moreover, the sets $\cB(x^\ast(p))$ are order-convex for all $p$ in $[p_{\rm min}, p_{\rm max}]$.
\end{cor}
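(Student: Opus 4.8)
The plan is to obtain the corollary as a pointwise application of Theorem~\ref{thm:comp-sys-b}. Fix an arbitrary $p\in[p_{\rm min},p_{\rm max}]$ and let the two monotone bounding systems of the theorem be $\dot x=f(x,p_{\rm min})$ (in the role of $g$) and $\dot x=f(x,p_{\rm max})$ (in the role of $h$), with the ``middle'' system being $\dot x=f(x,p)$ itself. The common-domain requirement $\cD=\cD_g=\cD_f=\cD_h$ holds since $\cD=\cD_{f(\cdot,q)}$ for every $q$, and A1--A2 hold for each parameter value by hypothesis. Because the family is monotone on $\cD\times\cP$, the Kamke--M\"uller sign conditions make $f$ increasing in $p$, so $p_{\rm min}\preceq p\preceq p_{\rm max}$ yields $f(x,p_{\rm min})\preceq f(x,p)\preceq f(x,p_{\rm max})$ for all $x\in\cD$, which is exactly~\eqref{ineq:comp-prin}. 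It remains to recover the theorem's hypotheses~\eqref{eq:condss-b} and~\eqref{eq:condss2-b} from the weaker endpoint conditions~\eqref{eq:condssp-b} and~\eqref{eq:condssp2-b}.

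The engine for this is the monotone dependence of the equilibria on the parameter, namely $x^\ast(p_{\rm min})\preceq x^\ast(p)\preceq x^\ast(p_{\rm max})$ and $x^\bullet(p_{\rm min})\preceq x^\bullet(p)$. First, $x^\ast(p_{\rm min})\preceq x^\ast(p_{\rm max})$ follows by comparing the two extreme systems from the initial state $x^\ast(p_{\rm max})$: Proposition~\ref{lem:comp-prin-control} gives $\phi_f(t,x^\ast(p_{\rm max}),p_{\rm min})\preceq x^\ast(p_{\rm max})$, while~\eqref{eq:condssp-b} sends the left-hand flow to $x^\ast(p_{\rm min})$. For intermediate $p$, I would note that $f(x^\ast(p_{\rm min}),p)\succeq f(x^\ast(p_{\rm min}),p_{\rm min})=0$, so the trajectory $\phi_f(t,x^\ast(p_{\rm min}),p)$ is nondecreasing in $t$; by the comparison principle it is dominated by $\phi_f(t,x^\ast(p_{\rm min}),p_{\rm max})$, which converges to $x^\ast(p_{\rm max})$ by~\eqref{eq:condssp-b}. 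This trajectory is therefore monotone and bounded and converges to a fixed point of $f(\cdot,p)$ inside $[x^\ast(p_{\rm min}),x^\ast(p_{\rm max})]$; a symmetric nonincreasing trajectory started at $x^\ast(p_{\rm max})$ traps the equilibria from above, and the analogous increasing trajectory from $x^\bullet(p_{\rm min})$ yields $x^\bullet(p_{\rm min})\preceq x^\bullet(p)$, provided orbits remain bounded.

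Granting these orderings, hypothesis~\eqref{eq:condss-b} is immediate: $x^\ast(p)\in[x^\ast(p_{\rm min}),x^\ast(p_{\rm max})]$, and since each of $\cB(x^\ast(p_{\rm min}))$ and $\cB(x^\ast(p_{\rm max}))$ is order-convex by Proposition~\ref{prop:basin-proper} and contains both endpoints by~\eqref{eq:condssp-b}, it contains the entire order interval, hence $x^\ast(p)$. For~\eqref{eq:condss2-b}, A2 gives $x^\bullet(p_{\rm min})\succeq x^\ast(p_{\rm min})$, so~\eqref{eq:condssp2-b} is equivalent to $x^\bullet(p_{\rm min})\not\preceq x^\ast(p_{\rm max})$; since $x^\bullet(p)\succeq x^\bullet(p_{\rm min})$, this propagates to $x^\bullet(p)\not\preceq x^\ast(p_{\rm max})$, i.e. $x^\bullet(p)\notin[x^\ast(p_{\rm min}),x^\ast(p_{\rm max})]$. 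Theorem~\ref{thm:comp-sys-b} then delivers the inclusions~\eqref{eq:boundsparam}, and order-convexity of every $\cB(x^\ast(p))$ is a direct consequence of Proposition~\ref{prop:basin-proper}, since each $\dot x=f(x,p)$ is monotone.

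The main obstacle is the identification step hidden in the second paragraph: showing that the monotone, bounded trajectories converge to the \emph{designated} stable fixed points $x^\ast(p)$ and $x^\bullet(p)$ rather than to some intermediate (possibly unstable) equilibrium of $f(\cdot,p)$ lying in the order interval. This is genuinely delicate, because excluding such an equilibrium is close to the very separation fact one is trying to transport from the endpoints, so the argument is mildly self-referential. I expect to close it by combining monotone convergence with a (mild) boundedness-of-orbits condition and~\eqref{eq:condssp2-b}, or by invoking a standard monotone-dynamics result on the monotone parameter dependence of extremal stable equilibria; everything else reduces to routine comparison-principle bookkeeping.
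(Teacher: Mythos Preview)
Your approach is exactly the paper's: set $g=f(\cdot,p_{\min})$, $h=f(\cdot,p_{\max})$, obtain the vector-field ordering~\eqref{ineq:comp-prin} from monotonicity in the parameter, place $x^\ast(p)$ in $\cB(x^\ast(p_{\min}))\cap\cB(x^\ast(p_{\max}))$ via order-convexity of the basins and the interval $[x^\ast(p_{\min}),x^\ast(p_{\max})]$, propagate~\eqref{eq:condssp2-b} to~\eqref{eq:condss2-b} using $x^\bullet(p_{\min})\preceq x^\bullet(p)$, and invoke Theorem~\ref{thm:comp-sys-b}.

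The ``identification step'' you flag as the main obstacle is not something the paper resolves more carefully than you do: it simply asserts ``Due to monotonicity we have that $x^\bullet(p_{\min})\preceq x^\bullet(p)$'' and tacitly uses $x^\ast(p)\in[x^\ast(p_{\min}),x^\ast(p_{\max})]$ without arguing that the relevant monotone bounded trajectories land on the designated stable equilibria rather than on some intermediate one. So you are not missing any idea present in the paper; you have merely been more scrupulous in naming the step that both arguments take for granted as a routine consequence of monotone parameter dependence.
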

\begin{proof}
Let $g(x)= f(x, p_{\rm min})$ and $h(x)= f(x,p_{\rm max})$. According to the order in the parameter set, we have that 
\begin{align}
g(x)\preceq  f(x, p) \preceq  h(x),
\end{align}
for all $(x,p)\in\cD \times [p_{\rm min}, p_{\rm max}]$. 

The interval $[x^\ast(p_{\rm min}), x^\ast(p_{\rm max})]$ is a subset of $\cB(x^\ast(p_{\rm max}))\bigcap \cB(x^\ast(p_{\rm min}))$, since the endpoints of this interval belong to this intersection according to~\eqref{eq:condssp-b}. Therefore for all $p\in [p_{\rm min}, p_{\rm max}]$ we have that $x^\ast(p) \in \cB(x^\ast(p_{\rm max}))\bigcap \cB(x^\ast(p_{\rm min}))$ and \eqref{eq:condss-b} in the premise of Theorem~\ref{thm:comp-sys-b} follows. 

Due to monotonicity we have that $x^\bullet(p_{\rm min}) \preceq  x^\bullet(p)$ for all $p_{\rm min} \preceq  p$. Since $x^\bullet(p_{\rm min})\succeq  x^\ast(p_{\rm min})$, the condition $x^\bullet(p) \in [x^\ast(p_{\rm min}), x^\ast(p_{\rm max})]$ implies that $x^\bullet(p_{\rm min}) \in [x^\ast(p_{\rm min}), x^\ast(p_{\rm max})]$  and contradicts~\eqref{eq:condssp2-b}. Hence $x^\bullet(p)\not \in [x^\ast(p_{\rm min}), x^\ast(p_{\rm max})]$ and~\eqref{eq:condss2-b} in the premise of Theorem~\ref{thm:comp-sys-b} follows, application of which completes the proof.
\end{proof}

Corollary~\ref{prop:basin-unc} implies that we can predict a direction of change in the basins of attraction subject to parameter variations if we check a couple of simple conditions. This result may be valuable for design purposes in some applications. For instance, in the toxin-antitoxin example (see Section~\ref{s:examples-ta}) it is desirable to make the basin of attraction of the fixed point corresponding to the high toxin concentration smaller. This would increase the likelihood of switching from this fixed point to another (which corresponds to low toxin concentrations) subject to intrinsic and/or exogenous noise. However, the considered toxin-antitoxin model is not monotone and further investigation is required to extend Corollary~\ref{prop:basin-unc} to a larger class of systems.

Another design problem is to determine the set of all parameters for which a monotone system is at least bistable. We offer the following development of this problem. 

\begin{prop}
\label{prop:bistability-region}
Consider a family of systems $\dot x = f(x, p)$ with a vector of parameters $p$ taking values from an order-convex set $\cP$. Let the system be strongly monotone on $\cD\times \cP$, where $\cD = \cD_{f(\cdot,  q)}$ for all $q\in\cP$, i.e. $\phi(t, x, p) \ll \phi(t, y, q)$ if $x\prec y$ or $p \prec q$. Consider an interval $[p_{\rm min}, p_{\rm max}]$ such that the systems $\dot x = f(x,p_{\rm min})$ and $\dot x = f(x,p_{\rm max})$ satisfy Assumptions~{A1}--{A2} and (\ref{eq:condssp-b},\ref{eq:condssp2-b}) hold. Then the intervals $[x^\ast(p_{\rm min}), x^\ast(p_{\rm max})]$, $[x^\bullet(p_{\rm min}), x^\bullet(p_{\rm max})]$ are compact attractors for  
the system $\dot x = f(x,p)$ for all $p\in [p_{\rm min}, p_{\rm max}]$. Furthermore, both intervals contain at least one fixed point such that the eigenvalues of the Jacobian at these fixed points have nonpositive real parts.\hfill $\diamond$
\end{prop}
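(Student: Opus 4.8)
The plan is to reduce everything to the behaviour of the two extreme systems $\dot x = f(x,p_{\rm min})$ and $\dot x = f(x,p_{\rm max})$ and to exploit monotonicity in the parameter. First I would record that strong monotonicity on $\cD\times\cP$ makes $f(x,\cdot)$ nondecreasing, so that $f(x,p_{\rm min})\preceq f(x,p)\preceq f(x,p_{\rm max})$ for every $p\in[p_{\rm min},p_{\rm max}]$; evaluating the flow monotonicity of Definition~\ref{def:mon} at a common initial condition then yields the sandwich $\phi(t,x_0,p_{\rm min})\preceq\phi(t,x_0,p)\preceq\phi(t,x_0,p_{\rm max})$ for all $t\ge 0$. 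The same inequalities say that $x^\ast(p_{\rm min})$ and $x^\bullet(p_{\rm min})$ are sub-equilibria and that $x^\ast(p_{\rm max}),x^\bullet(p_{\rm max})$ are super-equilibria of the intermediate system $\dot x=f(x,p)$, so the trajectories issued from these four corner points are monotone in $t$.

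Next I would show that the endpoints of each interval are correctly ordered and lie in the appropriate basins. For the lower interval this is immediate: the nondecreasing trajectory $\phi(t,x^\ast(p_{\rm min}),p_{\rm max})$ converges to $x^\ast(p_{\rm max})$ by~\eqref{eq:condssp-b}, which forces $x^\ast(p_{\rm min})\preceq x^\ast(p_{\rm max})$, and~\eqref{eq:condssp-b} itself places both endpoints in $\cB(x^\ast(p_{\rm min}))\cap\cB(x^\ast(p_{\rm max}))$ of the extreme systems. For the upper interval I would run the nondecreasing trajectory $\phi(t,x^\bullet(p_{\rm min}),p_{\rm max})$: it converges to a fixed point $\hat e\succeq x^\bullet(p_{\rm min})$ of the $p_{\rm max}$-system, and \eqref{eq:condssp2-b} (which, combined with A2, gives $x^\bullet(p_{\rm min})\not\preceq x^\ast(p_{\rm max})$) rules out $\hat e=x^\ast(p_{\rm max})$. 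To exclude convergence to an unstable fixed point on the basin boundary I would invoke strong monotonicity: by Proposition~\ref{prop:strong-mon-separ} that boundary contains no two comparable points, whereas strong monotonicity makes the trajectory strictly increasing, so it cannot remain on the separatrix; hence $\hat e=x^\bullet(p_{\rm max})$, giving both $x^\bullet(p_{\rm min})\preceq x^\bullet(p_{\rm max})$ and $x^\bullet(p_{\rm min})\in\cB(x^\bullet(p_{\rm max}))$. The symmetric nonincreasing trajectory from $x^\bullet(p_{\rm max})$ supplies $x^\bullet(p_{\rm max})\in\cB(x^\bullet(p_{\rm min}))$.

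With the endpoints in hand, the attractor property follows cleanly. Each interval is a compact order interval, and the sub-/super-equilibrium property of its endpoints makes it positively invariant for $\dot x=f(x,p)$, by squeezing $\phi(t,x_0,p)$ between $\phi(t,a,p)\succeq a$ and $\phi(t,b,p)\preceq b$. For the attracting neighbourhood I would take $N=\cB(x^\ast(p_{\rm min}))\cap\cB(x^\ast(p_{\rm max}))$ and its $x^\bullet$-analogue: each is open, and by order-convexity of the extreme basins (Proposition~\ref{prop:basin-proper}) together with the endpoint memberships it contains the whole interval. For any $x_0\in N$ the two extreme flows converge to the respective corner equilibria, so $\phi(t,x_0,p)$ is bounded by the sandwich and every one of its $\omega$-limit points lies between $x^\ast(p_{\rm min})$ and $x^\ast(p_{\rm max})$, hence in the interval; the compact interval therefore attracts the open neighbourhood $N$, establishing that it is a compact attractor.

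Finally, for the eigenvalue claim I would use the monotone trajectory from the lower endpoint. Being a sub-equilibrium (strict for $p\succ p_{\rm min}$), it increases to an equilibrium $e$ inside the interval, approached strictly from below. The Jacobian $J(e,p)$ has nonnegative off-diagonal entries for a monotone system (Kamke condition), so by the Perron--Frobenius theorem its eigenvalue of largest real part $\lambda_1$ is real with a nonnegative eigenvector; writing the left Perron eigenvector $w_1\gg 0$ and $\delta(t)=\phi(t)-e\prec 0$, the scalar $w_1^T\delta$ is negative and satisfies $\frac{d}{dt}(w_1^T\delta)\approx\lambda_1\,w_1^T\delta$ near $e$, so $\lambda_1>0$ would make $|w_1^T\delta|$ grow, contradicting $\delta\to 0$. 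Thus $\lambda_1\le 0$, and since $\lambda_1$ dominates the spectrum, all eigenvalues of $J(e,p)$ have nonpositive real part; the same argument on the upper interval finishes the proof, while the endpoint cases $p=p_{\rm min},p_{\rm max}$ are covered directly by asymptotic stability of the corner equilibria. The hard part is the second step: the hypotheses~\eqref{eq:condssp-b}--\eqref{eq:condssp2-b} control only the $x^\ast$-branch and $x^\bullet(p_{\rm min})$, so pinning down the limits of the $x^\bullet$-trajectories—ruling out both the wrong attractor and unstable boundary equilibria—relies essentially on strong monotonicity and the unordered-separatrix result of Proposition~\ref{prop:strong-mon-separ}, together with its technical boundedness assumption on $O(A)$.
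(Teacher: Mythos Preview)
Your argument follows the same core idea as the paper --- the parameter sandwich $\phi(t,x,p_{\rm min})\preceq\phi(t,x,p)\preceq\phi(t,x,p_{\rm max})$ --- but diverges from it in two respects worth noting.

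First, the paper's proof is extremely terse: after the sandwich it simply asserts that trajectories starting in a slightly enlarged interval $[x^\ast(p_{\rm min})-\varepsilon\bfone,\,x^\ast(p_{\rm max})+\varepsilon\bfone]$ converge into the interval, and then says ``the same result holds'' for the $x^\bullet$-interval without further comment. You are right that the hypotheses~\eqref{eq:condssp-b}--\eqref{eq:condssp2-b} are stated only for the $x^\ast$-branch, so your additional work --- running the monotone trajectory $\phi(t,x^\bullet(p_{\rm min}),p_{\rm max})$, excluding convergence to $x^\ast(p_{\rm max})$ via~\eqref{eq:condssp2-b}, and excluding convergence to a separatrix equilibrium via Proposition~\ref{prop:strong-mon-separ} --- genuinely fills a gap the paper leaves implicit. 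Be aware, though, that Proposition~\ref{prop:strong-mon-separ} carries the extra boundedness hypothesis on $O(A)$, which you flag but which is not part of the present statement; the paper's own proof sidesteps this by not justifying the $x^\bullet$ case at all.

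Second, for the fixed point with nonpositive spectral abscissa the paper simply cites Theorem~4.6 of Hirsch (1985), whereas you give a direct Perron--Frobenius argument along the monotone trajectory approaching $e$ from below. Your route is more self-contained and makes the mechanism transparent, but the linearisation step ``$\tfrac{d}{dt}(w_1^T\delta)\approx\lambda_1\,w_1^T\delta$'' needs to be made rigorous (e.g.\ via a Gronwall estimate on the remainder, or by observing that the stable set of $e$ has empty interior when $\lambda_1>0$ while your trajectory approaches $e$ through an open order-cone); as written it is a correct heuristic rather than a proof. In effect you are reproving the relevant part of Hirsch's theorem, which is fine provided you tighten that step.
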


\begin{proof} Since $\phi_f(t, x, p_{\rm min}) \preceq \phi_f(t, x, p) \preceq \phi_f(t, x, p_{\rm max})$ for all $p\in[p_{\rm min}, p_{\rm max}]$, the flow $\phi_f(t, x, p)$ with the initial condition $x \in [x^\ast(p_{\rm min})-\varepsilon \bfone, x^\ast(p_{\rm max})+\varepsilon \bfone]$ for a small $\varepsilon>0$ converges to $[x^\ast(p_{\rm min}), x^\ast(p_{\rm max})]$. The same result holds for the interval $[x^\bullet(p_{\rm min}), x^\bullet(p_{\rm max})]$. Since the intervals $[x^\ast(p_{\rm min}), x^\ast(p_{\rm max})]$ and $[x^\bullet(p_{\rm min}), x^\bullet(p_{\rm max})]$ do not intersect according to~\eqref{eq:condssp2-b}, they are compact attractors for the system $\dot x = f(x,p)$ for every $p$. According to Theorem~4.6 in~\cite{hirsch1985systems}, both intervals contain a fixed point such that the Jacobian at this fixed point has eigenvalues with nonpositive real parts. 
\end{proof}

We note that these fixed points are not necessarily stable in the classical sense. However, using strong monotonicity and the convergence criterion~\cite{hirsch2005monotone} we can show that $e = \lim\limits_{t\rightarrow\infty} \phi(t, x^\ast(p_{\rm min}), p)$ is a fixed point for all $p\in[p_{\rm min}, p_{\rm max}]$. Furthermore, all points in $[x^\ast(p_{\rm min}), e]$ are converging to $e$ due to strong monotonicity. Fixed points with this property are called stable from below~\cite{hirsch2005monotone}. 

\section{Discussion on Computation of Basins of Attraction and Isostables}
\label{s:alg-disc}
\subsection{Lyapunov Methods} 
In control theory, a go-to approach for computing forward-invariant sets (not only basins of attraction) of dynamical systems is sum-of-squares (SOS) programming, see e.g.~\cite{henrion2014convex} and~\cite{valmorbida2014region} and the references within. This approach can be applied to systems with polynomial vector fields.

Another option is to compute the eigenfunction $s_1$, which provides the isostables and the basin of attraction. In the case of a polynomial vector field, we can formulate the computation of $s_1$ as an infinite dimensional \emph{linear algebraic} problem using~\eqref{eq:s-one}. Hence, we can provide an approximation of $s_1$  using linear algebra by parameterizing $s_1$ with a finite number of basis functions~\cite{mauroy2014global}. On another hand, we can estimate $s_1$ \emph{directly from data} using dynamic mode decomposition methods~\cite{Schmid2010,Tu2014}. These two options provide extremely cheap estimates of $s_1$. In fact, the algebraic methods (as demonstrated in~\cite{mauroy2014global}) also provide good estimates on basins of attraction. However, we cannot typically compute estimates with an excellent approximation quality, which comes as a tradeoff for fast computations. The eigenfunction $s_1$ can also be computed on a mesh grid by using Laplace averages~\eqref{laplace-average} and by simulating a number of trajectories with initial points on this mesh grid. Interpolation or machine learning methods can then be applied to estimate the dominant eigenfunction. 

In~\cite{valmorbida2014region}, the authors combined the maximal Lyapunov function approach with SOS techniques, which resulted in an iterative semidefinite program. As we have discussed above, the function $s_1$ can be used to construct a maximal Lyapunov function. Hence the main difference between using $s_1$ and \cite{valmorbida2014region} is algorithmic. In our point of view, using $s_1$ can be beneficial, since we compute $s_1$ directly (by linear algebra or Laplace averages), while in~\cite{valmorbida2014region} it is required to optimize over the shape of the maximal Lyapunov function.

In~\cite{henrion2014convex}, the authors propose an approach that is conceptually similar to the computation of $s_1$. They also lift the problem to an infinite dimensional space, but in the framework of the so-called Perron-Frobenius operator acting on measures, which is dual to the Koopman operator. The authors propose an infinite-dimensional linear program to compute a specific occupation measure related to the basin of attraction. They then consider a series of relaxations using the moments of the measures to formulate their problem as an iterative finite-dimensional semidefinite program (as in the case of~\cite{valmorbida2014region}). In some sense, this is similar to the relaxation of the procedure to compute $s_1$ from \emph{an infinite-dimensional algebraic problem} to \emph{a finite-dimensional one}. We note that, in contrast to the computation of $s_1$, the methods proposed in~\cite{henrion2014convex} do not rely on the spectral properties of the operator.

Even though recent advances in optimization allowed solving some semidefinite programs as iterative linear programs~\cite{basis_pursuit}, SOS approaches still lead to highly dimensional iterative optimization algorithms. Therefore, the applicability of SOS-based methods to high dimensional systems is delicate due to memory and computational power requirements. In our opinion, there is a tradeoff between two options: computing rather cheaply $s_1$ (e.g., using Laplace averages) with weak guarantees or employing semidefinite programming with heavy computational requirements and strong guarantees. In this context, the estimation of $s_1$ is simply one possible option. 

\subsection{Data Sampling Algorithms}\label{s:alg}
In the case of monotone systems, it follows from Proposition~\ref{thm:level-set-proper} that for all $z_1,z_2 \in \cB_\alpha$ with $\alpha>0$ and $z_1 \ll z_2$, the set $[z_1, z_2]$ is an inner approximation of $\cB_\alpha$ (with a non-zero measure). With $\{z_i\}_{i =1}^{N}\in\cB_\alpha$, an inner approximation of $\cB_\alpha$ is computed as $\bigcup_{i, j = 1, \dots, N} [z_i,~z_j]$. It is also possible to build an outer approximation. Assume we want to compute an approximation of $\cB_\alpha$ on an interval $\B = [b^0,~b^1]$, where $b^0\in \cB_\alpha$ and $b^1\not \in\cB_\alpha$. Let $\{z_i^o\}_{i =1}^{N_o}$ be in $\B$, but not in $\cB_\alpha$. Then the outer approximation is computed as follows $\B / \bigcup_{i= 1, \dots, N_o} [z_i^o,~b_1]$. This implies that $\cB_\alpha$ can be estimated by a data sampled approach in the case of monotone systems. The data sampling algorithms presented below have a few of advantages over Lyapunov methods:
\begin{itemize}
	\item The algorithms can be applied to non-polynomial vector fields and have low memory requirements;
	\item The algorithms can be parallelized by generating several samples $z^j$ at the same time;
	\item Inner and outer approximations of $\cB_\alpha$ are computed at the same time;
	\item It is straightforward to compute estimates on $2$-D and $3$-D cross-sections of $\cB_\alpha$.
\end{itemize}
\begin{rem}
By $k$-D cross-sections we mean the following. We fix $n-k$ state components to be equal to a constant, that is $x_{j_i} = p_i$ for $i=1,\dots n-k$ and $j_i \in \cI \subset \{1,\dots,n\}$, and define a cross-section as follows:
 \begin{gather*}
 \widetilde \cB(z, p, \cI) = \left\{ x \in \cB(z) \Bigl| x_{j_i} = p_i, j_i \in \cI \right\},
 \end{gather*}
 where $z$ is an asymptotically stable fixed point. \hfill $\diamond$
\end{rem}

 We discuss the approach in an abstract form using oracles. Let $\cO(z):\B\rightarrow \{0,1\}$ be an increasing function, that is $\cO(x) \le \cO(y)$ for all $x\preceq y$ and let $\B = \{z \in\R^{n}| b^0 \preceq z \preceq b^1 \}$ be such that $\cO(b^0) = 0$ and $\cO(b^1) = 1$.	Our goal is to compute the set $\{z \in \B | \cO(z) = 0 \}$. For the computation of $\cB_\alpha$ we use the following oracle:
\begin{gather}
\label{eq:oracle}
\cO(z) = \begin{cases}
0  & \textrm{ if } |s_1(z)| < \alpha \textrm{ and } \|\phi(T,z) - x^\ast\| < \varepsilon, \\
1  & \textrm{ otherwise},
\end{cases} 
\end{gather}
where the value of the eigenfunction $s_1(z)$ is computed using Laplace averages~\eqref{laplace-average}, and $T$ is large enough. In our examples, we chose the observable $g(x) = w_1^T (x-x^\ast)$, where $w_1$ is the right dominant eigenvector of the Jacobian matrix $J(x^\ast)$. We need the second condition $\|\phi(T,z) - x^\ast\| < \varepsilon$ in \eqref{eq:oracle} to make sure that the points also lie in $\cB$. Even though it is unlikely to have a point $z$ with a finite $s_1(z)$ not lying in $\cB$, such situations can occur numerically. If we need to compute $\cB$, then we drop the condition $|s_1(z)| < \alpha$, since here $\alpha = \infty$.

The main idea of the algorithm is based on the increasing property of $\cO$. If a sample $z^j$ is such that $\cO(z^j) = 0$, then for all $w \preceq z^j$ we have $\cO(w)  = 0$. Similarly, if a sample $z^j$ is such that $\cO(z^j) = 1$, then for all $w \succeq z^j$ we have $\cO(w)  = 1$. Therefore, we need to keep track of the largest (in the order) samples $z^j$ with $\cO(z^j) =0$, the set of which we denote $\cM^{\rm min}$, and the smallest (in the order) samples $z^j$ with $\cO(z^j) = 1$, the set of which we denote $\cM^{\rm max}$. The set $\cM^{\rm min}$ lies in $\{z \in\R^{n} | \cO(z) = 0 \}$, while the set $\cM^{\rm max}$ lies in $\{z\in\R^{n} |\cO(z) = 1\}$. Since $\cO(z)$ is an increasing function, the set $\cM^{\rm min}$ (respectively, the set $\cM^{\rm max}$) can be used to build a piecewise constant inner approximation (respectively, an outer approximation) of the set $\{z \in \B | \cO(z) = 0 \}$. In order to improve the approximation quality at every step, we generate new samples $z$ in the set 
\begin{gather*}
\A = \left\{z\in\R^n | x \not\succ z, z\not \succ y, \,\,  \forall x \in \cM^{\rm min}, \forall y \in \cM^{\rm max} \right\},
\end{gather*}
and update $\cM^{\rm min}$ and $\cM^{\rm max}$. The approach is summarized in Algorithm~\ref{alg:s1}, the major parts of which are the stopping criterion and the generation of new samples. 

\begin{algorithm}[t]
	\caption{Computation of the level set of $\cO(z)$}
	\label{alg:s1}
	\begin{algorithmic}[1]
		\State {\bf Inputs:} Oracle $\cO$, the initial set $\B$
		\State {\bf Outputs:} The sets of points $\cM^{\rm min}$, $\cM^{\rm max}$.
		\State Set $\A = \B$, $\cM^{\rm min} = \{b^0\}$,  $\cM^{\rm max} = \{b^1\}$
		\While{stopping criterion is not satisfied}
		\State Generate a sample $z\in \tilde \A$, where $\tilde \A \subset \inter(\A)$
		\State If $\cO(z) = 0$, then add $z$ to $\cM^{\rm min}$
		\State If $\cO(z) = 1$, then add $z$ to $\cM^{\rm max}$
		\State Update $\A$ using $\cM^{\rm min}$ and $\cM^{\rm max}$
		\EndWhile
	\end{algorithmic}
\end{algorithm}
In~\cite{sootla2016basins} it was proposed to use an estimate on the volume of $\A$ to establish a stopping criterion. It was also proposed to generate part of the samples randomly using a distribution with the support on $\tilde \A = \inter(\A)$, and to generate the rest greedily by finding the areas of $\A$ which contain largest intervals. While the volume of $\A$ is large, random sampling helps to learn the shape of the function $\cO(z)$. As the algorithm progresses, the greedy sampling ensures faster convergence of inner and outer approximations. In this algorithm, it is required to keep the points in $\cM^{\rm min}$, $\cM^{\rm min}$ unordered, i.e., $x\not \succ y$ for all $x$, $y$ in $\cM^{\rm min}$ and $x \not \prec y$ for all $x$, $y\in\cM^{\rm max}$. Therefore after generation of new samples, we need to prune $\cM^{\rm min}$ (respectively, $\cM^{\rm max}$) by removing all $x$ for which there exists $y\in\cM^{\rm min}$ such that $x\prec y$ (respectively, $y\in\cM^{\rm min}$ such that $x\succ y$).
In~\cite{kim2016directed}, it was proposed to use 
\begin{gather*}
\tilde \A^\varepsilon = \left\{z\in\R^n | x \not\succ z, z\not \succ y, \,\,  \forall x \in \cM^{\rm min}_\varepsilon, \forall y \in \cM^{\rm max}_\varepsilon  \right\},
\end{gather*}
where $\varepsilon >0$ is called the learning rate, $\cM^{\rm max}_\varepsilon = \{y - \varepsilon \bfone \in \R^n| y \in \cM^{\rm max}\}$, $\cM^{\rm min}_\varepsilon = \{x + \varepsilon \bfone \in \R^n | x \in \cM^{\rm min}\}$. If $\tilde \A^\varepsilon$ is empty, then the learning rate is adjusted as follows: $\varepsilon = \alpha \varepsilon$ for some $\alpha \in (0,~1)$. The existence of $z$ in $\tilde \A^\varepsilon$ can be established using the solver~\cite{de2008z3}, which also produces a solution point $z$, if it exists. The stopping criterion is a lower bound on $\varepsilon$.

The algorithm from~\cite{sootla2016basins} allows controlling where the new samples are generated and ensures that the samples always decrease the volume of $\A$ in a maximal way according to the proposed heuristic. In order to do so, we need to sweep through a large number of points in $\cM^{\rm min}$, $\cM^{\rm max}$, which can be computationally expensive for a large $n$ (recall that $z\in\R^{n}$). At the same time, the algorithm from~\cite{kim2016directed} avoids computing an estimate of the volume of $\A$ by solving a feasibility problem using an efficient solver. Even though the algorithm~\cite{sootla2016basins} can potentially be implemented using efficient search methods over partially ordered sets, the algorithm~\cite{kim2016directed} seems to be more appealing due to the off-the-shelf feasibility solver. 

We finally note that in numerical experiments the algorithm from~\cite{sootla2016basins} exhibits exponential convergence in the estimation error versus the number of generated samples (at least in the cases $n =2$, $n=3$). In~\cite{kim2016directed} it is argued that, in order to converge to an $\varepsilon$ so that $\tilde \A^\varepsilon$ is empty, it is required to generate $\left(\frac{\max_i |b^1_i-b^0_i|}{\varepsilon}\right)^n$ samples in the worst case. Hence, this approach requires a finite number of samples to converge and potentially has an exponential convergence.

\section{Examples } \label{s:examples}
\subsection{A Two-State Toggle Switch} 
We illustrate our methods on a genetic toggle switch model (see e.g.~\cite{Gardner00}). We choose this model since it is extensively studied in synthetic biology so that our results can be verified with other techniques. We consider the following model:
\begin{gather}\label{sys:toggle-2d}
\begin{aligned}
\dot x_1 &= p_{1 1} + \frac{p_{1 2}}{ 1 + x_2^{p_{1 3}}} - p_{1 4} x_1, \\
\dot x_2 &= p_{2 1} + \frac{p_{2 2}}{ 1 + x_1^{p_{2 3}}} - p_{2 4} x_2,
\end{aligned}
\end{gather}
where all $p_{i j} \ge 0$. The states $x_{i}$ represent the concentration of proteins, whose mutual repression is modeled via a rational function. The parameters $p_{1 1}$ and $p_{2 1}$ model the basal synthesis rate of each protein. 
The parameters $p_{1 4}$ and $p_{2 4}$ are degradation rates, and $p_{1 2}$, $p_{2 2}$ describe the strength of mutual repression. The parameters $p_{1 3}$, $p_{2 3}$ are called Hill coefficients. The model is monotone on $\Rnn^2$ for all nonnegative parameter values with respect to the orthant $\diag{\begin{pmatrix} 1 & -1 \end{pmatrix}} \Rnn^2$. Moreover, the model is monotone with respect to all parameters but $p_{1 3}$, $p_{2 3}$. In this setting the fixed point $x^\ast$ has the state $x_2$ ``switched on'' ($x_2^\ast$ is much larger than $x_1^\ast$), while $x^\bullet$ has the state $x_1$ ``switched on'' ($x_1^\bullet$ is much larger than $x_2^\bullet$).

First, we check Corollary~\ref{prop:basin-unc} by considering
\begin{gather*}
 p = \begin{pmatrix}
  q_1 & q_2 & 4 & 1 \\ q_3 & q_4 & 3 & 2
 \end{pmatrix},
\end{gather*}
with the set of admissible parameters $\cQ = \{ q | q_{\rm max} \succeq_q q \succeq_q  q_{\rm min}\}$, where $q_{\rm min} = \begin{pmatrix}
	1.8 & 950 &  1.2 & 1050 
\end{pmatrix}$, $q_{\rm max} = \begin{pmatrix}
	2.2 & 1100 & 0.7 & 900 
\end{pmatrix}$, and $\succeq_q$ is induced by $\diag{\begin{pmatrix} 1 & 1 & -1 & -1\end{pmatrix}}\Rnn^4$. We compute the isostables $\partial \cB_\alpha(x^\ast(\cdot))$ with $\alpha = 0$, $2\cdot 10^3$, $\infty$ (where $\partial \cB_\infty(x^\ast(\cdot))$ is the boundary of the basin of attraction) for systems with parameters $q_{\rm min}$, $q_{\rm max}$, and $ q_{\rm int} = \begin{pmatrix}
2 & 1000 & 1 & 1000 
\end{pmatrix}$.

\begin{figure}[t]\centering
  \includegraphics[width = 0.8\columnwidth]{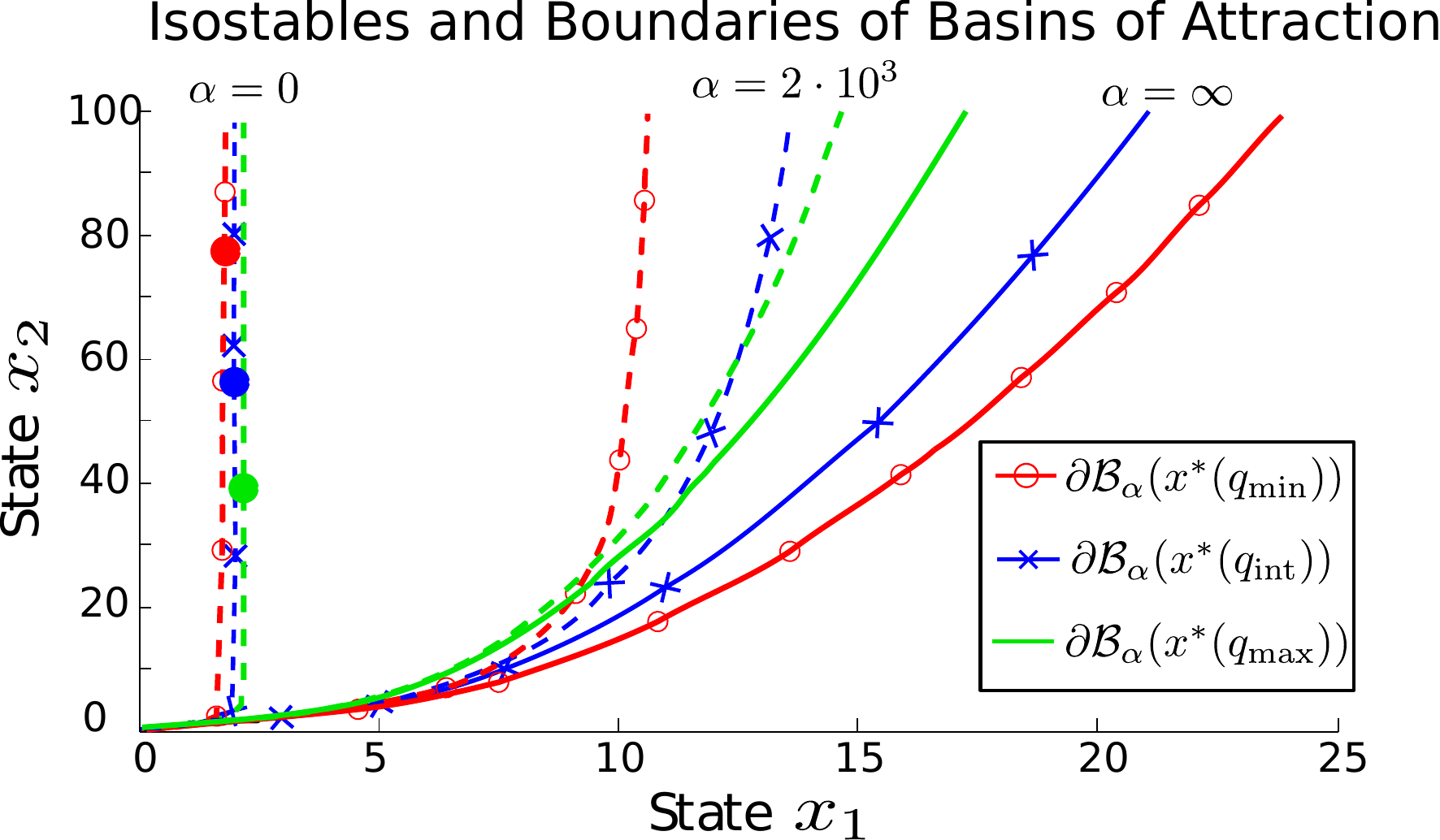}
\caption{Isostables $\partial \cB_{0}(x^\ast(q))$, $\partial \cB_{2\cdot 10^3}(x^\ast(q))$, and boundaries of basins of attraction $\partial \cB(x^\ast(q))$ for $q$ equal to $q_{\rm int}$, $q_{\rm min}$, or $q_{\rm max}$. The dots represent the fixed points $x^\ast(q_{\rm int})$ (blue), $x^\ast(q_{\rm min})$ (red), $x^\ast(q_{\rm max})$ (green) for different parameter values} \label{fig:toggle-2d-iso}
\end{figure}
 The computational results depicted in Figure~\ref{fig:toggle-2d-iso} suggest that for all parameter values $q\in\cQ$ the manifold $\partial \cB_{\infty}(x^\ast(q))$ will lie between the manifolds $\partial \cB_{\infty}(x^\ast(q_{\rm min}))$ and $\partial \cB_{\infty}(x^\ast(q_{\rm max}))$. It appears that $\partial \cB_0(x^\ast(q_{\rm int}))$ also lies between the manifolds $\partial \cB_0(x^\ast(q_{\rm min}))$ and $\partial \cB_0(x^\ast(q_{\rm max}))$, however, in a different order. This change of order and continuity of $s_1$ implies that there exists an $\alpha$ such that at least two manifolds $\partial \cB_\alpha(x^\ast(q_{\rm int}))$, $\partial \cB_\alpha(x^\ast(q_{\rm max}))$, $\partial \cB_\alpha(x^\ast(q_{\rm min}))$ intersect. This case is also depicted with $\alpha = 2\cdot 10^3$. This observation implies that $s_1(x,q)$ is not an increasing function in $q$. This is consistent with the linear case, where changes in the drift matrix $A$, will simply rotate the hyperplane $w_1^T x$, where $w_1$ is the left dominant eigenvector of $A$.
 
Another feature for a successful design of a genetic toggle switch is choosing the Hill coefficients. Computing the derivative of the repression term with respect to a Hill coefficient $p_{i 3}$ gives $-\dfrac{ p_{1 2}\ln(x) x^{p_{i 3}}}{(1 + x^{p_{i 3})^2}}$. Therefore for positive $x$ the derivative changes sign at $x = 1$, and the partial order with respect to $p_{i 3}$ cannot be defined on the whole state-space $\Rp^{2}$ and parameter space $\Rp^{2}$. Nevertheless, we can study how basins of attraction change subject to changes in Hill coefficients. Consider the following parameter values
\begin{gather}\label{param-values-coop}
 p = \begin{pmatrix}
 1 & 1000 & q_1 & 1 \\ 1 & 1000 & q_2 & 2
 \end{pmatrix}.
\end{gather}
The computation results are depicted in Figure~\ref{fig:toggle-2d-basins-coop}. We observe that the variations in basins of attraction are consistent with the result of Corollary~\ref{prop:basin-unc}, and the changes in Hill coefficients with respect to the order $\diag(1,~-1)\R^2$. It remains to verify if this result is an evidence of a partial order in Hill coefficients or simply a coincidence. However, in any case this result holds when we vary other parameters and hence can serve as a rule of thumb in designing toggle switches. We finally note that the curve $\partial\cB$ with $q_1 = 5$, $q_2 = 2$ is not convex or concave, which in other examples seems to be the case. We verified this observation by computing the curves with higher accuracy and smaller resolutions. The curve $\partial\cB$ with $q_1 = 5$, $q_2 = 2$ still separates $\R^2$ into two order-convex regions.
\begin{figure}[t]\centering
	\includegraphics[width = .6\columnwidth]{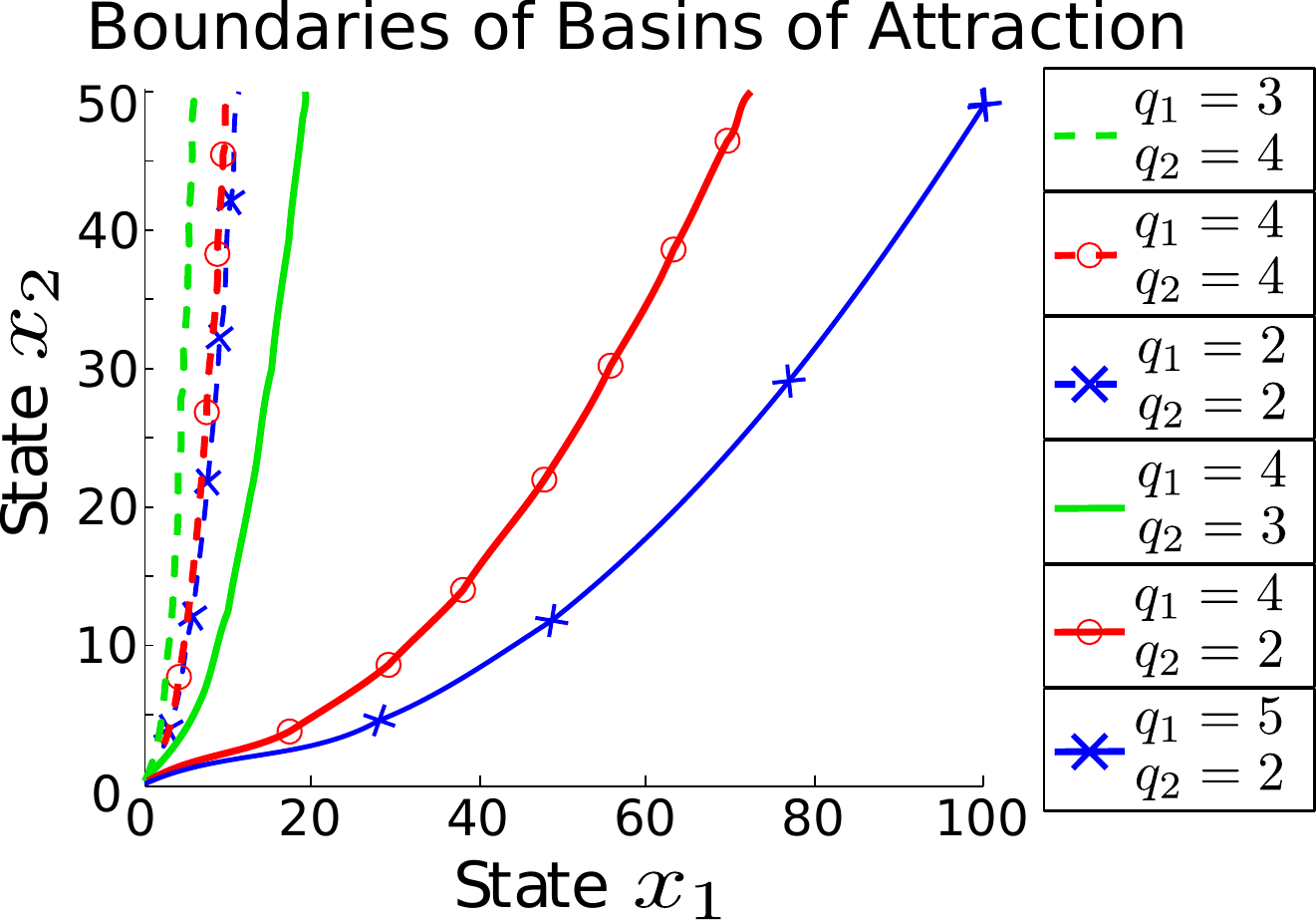}
	\caption{The effect of Hill coefficients (here denoted as $q_1$ and $q_2$) on the domain of attractions of the toggle switch system~\eqref{sys:toggle-2d}. The remaining parameters are fixed and described in~\eqref{param-values-coop}. } \label{fig:toggle-2d-basins-coop} 
\end{figure}
\begin{figure}[b]\centering
	\includegraphics[width = .43\columnwidth]{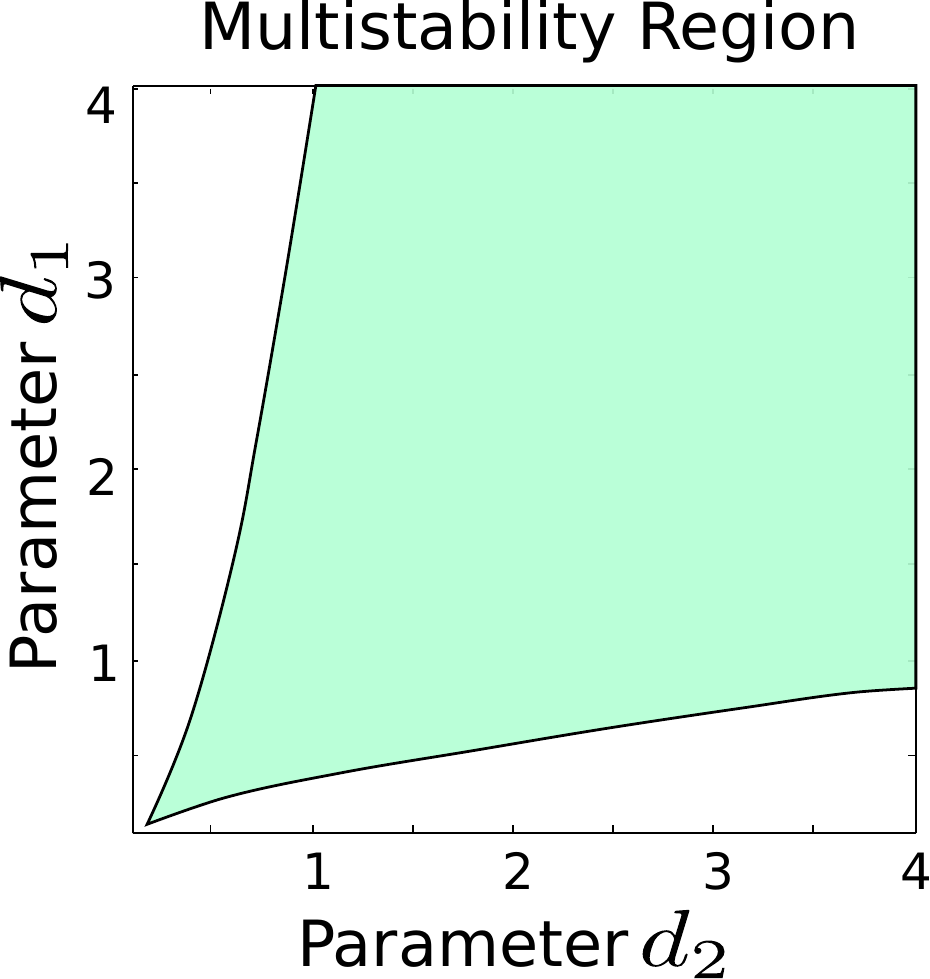}
	\caption{Approximation of the multistability region in the toggle switch system~\eqref{sys:toggle-2d} subject to variations in degradation rates $d_1$, $d_2$ and parameter values~\eqref{param-values-bist}.  The region of monostability is white, while the region of multistability is light green.} \label{fig:param-values-bist} 
\end{figure}

Finally, we illustrate the result of Proposition~\ref{prop:bistability-region} by considering another important problem for a successful design of a toggle switch: estimating the set of parameters for which the system is bistable. Consider the following parameter values
\begin{gather}\label{param-values-bist}
 p = \begin{pmatrix}
 2 & 700 & 2 & d_1 \\ 1 & 1000 & 2 & d_2
 \end{pmatrix}.
\end{gather}
The partial order in the parameter space is induced by the orthant $\diag(1,~-1)\R^2$. We have computed the multistability region on a mesh grid in $[0, 4]^2$, while computing the number of fixed points for given parameter values, and plot the estimation results in Figure~\ref{fig:param-values-bist}. This picture confirms some of the previous findings (including the ones in~\cite{Gardner00}). 

We note that most of the properties described above are known. We show, however, that these properties stem from monotone systems theory and are not limited to a particular model. Hence they can be extended to other models as long as they satisfy the premise of our theoretical results.

\subsection{Basin of Attraction of a Non-Monotone System} \label{ex:laci-tetr-3d}
We consider the following three-state system
\begin{equation}
\label{eq:non-mon}
\begin{array}{rl}
\dot x_1 &= \dfrac{1000}{1 + x_3^{2}} - 0.4 x_1, \\
\dot x_2 &= \dfrac{1000}{1 + x_1^{4}}  - 4 x_2 + u, \\
\dot x_3 &= p_{1}  + p_{2} \frac{x_1}{x_1+1}+ 5 x_2  - 0.3 x_3
\end{array}
\end{equation}
which is not monotone with respect to any order for all positive parameter values $p_1$ and $p_2$. This model does not have any biological interpretation and was designed in order to illustrate Theorem~\ref{thm:comp-sys-b} on a simple example. We consider the nominal system $\cF$ (with $p_1 =0.1$, $p_2 = 1$) and two bounding systems $\cG_1$ (with $p_1 = 0.1$, $p_2 =0$) and  $\cG_2$ ($p_1 = 1.1$, $p_2 =0$).
\begin{figure}[t]
	\centering
	\includegraphics[width = 0.65\columnwidth]{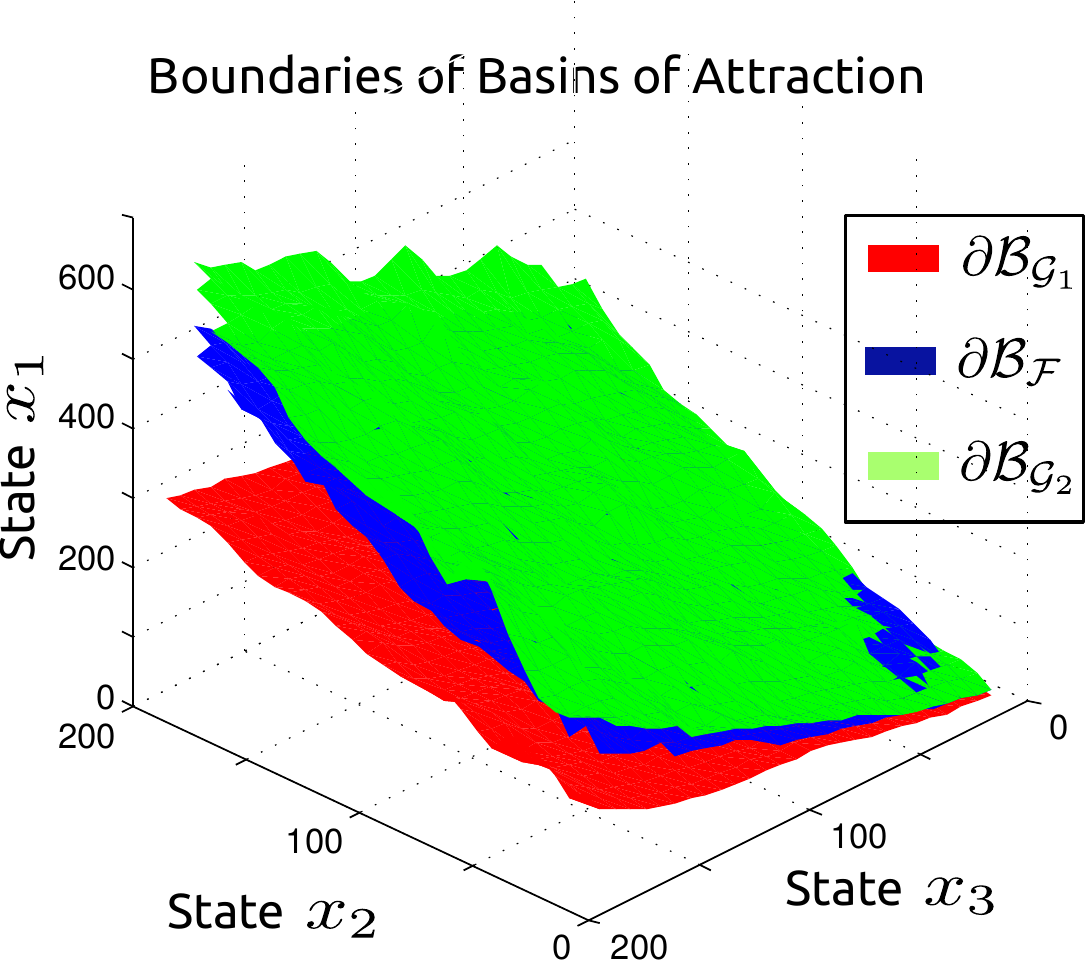}
	\caption{Illustration to Example~\ref{ex:laci-tetr-3d}. Boundaries of domains of attraction $\cB$ for the non-monotone system $\cF$ and its bounding monotone systems $\cG_1$, $\cG_2$. }
	\label{fig:ts-separ}
\end{figure}

It can be verified that $\cG_1$ and  $\cG_2$ are indeed monotone, bistable and bound $\cF$, since $0 \le \frac{x_1}{x_1+1} \le 1$. In this example, we illustrate the application of Theorem~\ref{thm:comp-sys-b}, according to which we can bound the basin of attraction of the non-monotone system $\cF$ by the basin of attraction of the monotone systems $\cG_1$, $\cG_2$. We also illustrate the effectiveness of our computational algorithm. We intentionally do not let the algorithm converge to an accurate solution. We generate only $1000$ samples in order to compute every surface. In this case, $550-650$ samples (depending on an example) are used to compute upper and lower bounds on each surface (we depict only upper bounding surfaces, which are built using $250$-$350$ samples). We see in Figure~\ref{fig:ts-separ} that the green surface related to $\cB_{\cG_2}$ does not bound the blue surface related to $\cB_{\cF}$ for all points in the state-space, which happens since our algorithm did not have enough samples to converge. However, the algorithm is already capable of producing surfaces that depict the overall shape of boundaries of domains of attraction. With more samples, we obtain a more accurate picture, which validates our theoretical results. The computational time for this example is between $10$ to $20$ seconds depending on the computational tolerance of the differential equation solver. 

\subsection{Toxin-Antitoxin System} \label{s:examples-ta}
Consider the toxin-antitoxin system studied in~\cite{cataudella2013conditional}:
\begin{align*}
\dot T &= \frac{\sigma_T}{\left(1 + \frac{[A_f][T_f]}{K_0}\right)(1+\beta_M [T_f])} - \dfrac{1}{(1+\beta_C [T_f])} T \\
\dot A &= \frac{\sigma_A}{\left(1 + \frac{[A_f][T_f]}{K_0}\right)(1+\beta_M [T_f])} - \Gamma_A  A \\
\epsilon [\dot A_f] &= A - \left([A_f] + \dfrac{[A_f] [T_f]}{K_T} + \dfrac{[A_f] [T_f]^2}{K_T K_{T T}}\right) \\ 
\epsilon [\dot T_f] &= T - \left([T_f] + \dfrac{[A_f] [T_f]}{K_T} + 2 \dfrac{[A_f] [T_f]^2}{K_T K_{T T}}\right), 
\end{align*}
where $A$ and $T$ is the total number of toxin and antitoxin proteins, respectively, while $[A_f]$, $[T_f]$ is the number of free toxin and antitoxin proteins. In~\cite{cataudella2013conditional} the model was considered with $\epsilon = 0$. Here, we set $\epsilon = 10^{-9}$ in order to show that, in the case of systems admitting a time-scale separation, we can estimate the basins of attraction of the reduced model without performing model reduction. For the parameters
$\sigma_T = 166.28$, $K_0 = 1$, $\beta_M = \beta_c =0.16$, $\sigma_A = 10^2$, 
$\Gamma_A = 0.2$, $K_T = K_{TT} = 0.3$, the system has two stable hyperbolic fixed points:
\begin{gather*}
x^{\bullet} = \begin{pmatrix}  27.1517 &  80.5151 & 58.4429 & 0.0877  \end{pmatrix} \\ 
x^\ast  = \begin{pmatrix} 162.8103 & 26.2221  &  0.0002 & 110.4375 \end{pmatrix}. 
\end{gather*}

Although the full and reduced systems are not monotone with respect to any orthant, numerical results in~\cite{sootla2015koopman} indicate that the basins of attraction of a reduced order system are still order-convex with respect to $\diag\{[1,~-1]\}\Rnn^2$.
\begin{figure}[t]\centering
	\includegraphics[width = 0.6\columnwidth]{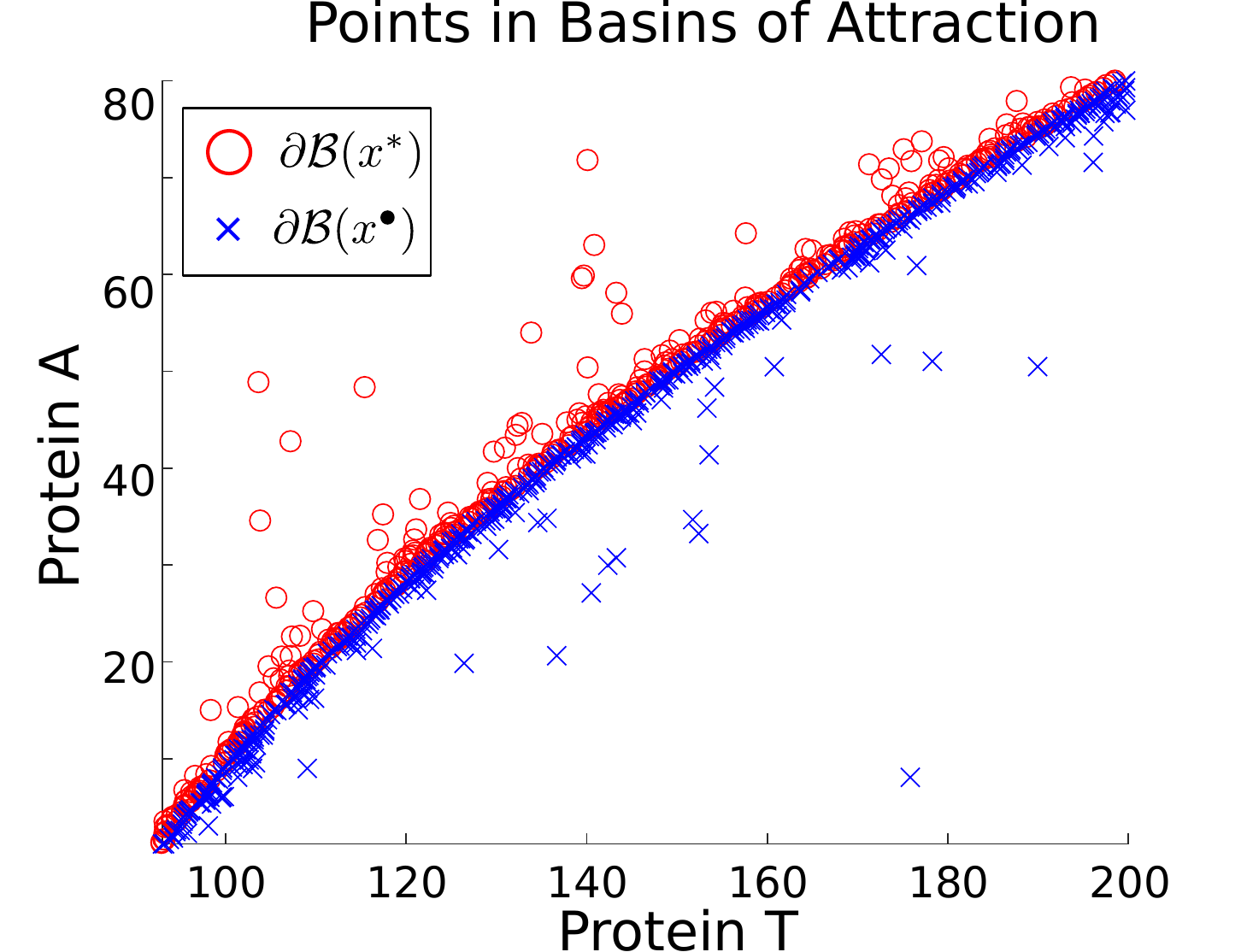}
	\caption{The red circles belong to the cross-section $\widetilde \cB(x^\bullet, \cI, p^1)$  of $\cB(x^\bullet)$, while the blue crosses belong to the cross-section $\widetilde \cB(x^\ast, \cI, p^1)$ of $\cB(x^\ast)$. 
	}\label{fig:tat-bas}
\end{figure}

In this example we compute 2-D cross-sections $\widetilde \cB(x^\ast, \cI, p^i)$ of a basin of attraction of the full order model with $\cI = \{3, 4\}$ and $p^1 = [58.4429,~0.0877]$, $p^2 = [25,~50]$, $p^3=[0.0002,~110.4375]$. All cross-sections gave results indistinguishable by the naked eye, which validates our approach for computing basins of attraction of slow dynamics only. In Figure~\ref{fig:tat-bas}, we present our numerical results for the computation of cross-sections $\widetilde \cB(x^\ast, \cI, p^1)$, $\widetilde \cB(x^\bullet, \cI, p^1)$. We plot all the points generated by our algorithm, including the ones that were pruned out during the algorithm. The total number of generated points was $1000$. The red circles belong to $\widetilde \cB(x^\bullet, \cI, p^1)$, while the blue crosses belong to $\widetilde \cB(x^\ast, \cI, p^1)$. Note that the vast majority of samples is generated near the separatrix between the basins of attraction. We also observe that the red circles and the blue crosses do not violate our assumption on order-convexity of $\cB(x^\ast)$ and $\cB(x^\bullet)$.

\section{Conclusion}
\label{s:conc}
In this paper, we study geometric properties of monotone systems. In particular, we investigate the properties of basins of attraction and relate them to the properties of isostables defined in the framework of the Koopman operator. We discuss in detail the relation between these concepts and their properties under some general assumptions and then focus on properties of basins of attraction of bistable systems. First, we show that we can estimate basins of attraction of bistable non-monotone systems, whose vector fields can be bounded from below and above by bistable monotone systems. This result uses standard tools in monotone systems theory and leads to estimation of basins of attraction of bistable monotone systems under parametric uncertainty. We also discuss a complementary problem: finding the set of parameter values for which a monotone system is (at least) bistable.

We discuss a numerical method for computing inner and outer approximations of basins of attraction of monotone systems. This method exploits the geometric properties of monotone systems and uses the trajectories of the system for computation. The method is potentially well-suited to high-dimensional spaces since it can be easily parallelized and has lower memory requirements than optimization-based methods. We also show how our theoretical results can be used to design a bistable toggle switch with two states. We discuss the effect of different parameters on the shape of the basin of attraction and provide some simple strategies to predict the possible shape of a basin without explicitly computing the basin itself.
\bibliography{ba_final_online.bbl}
\end{document}